\theoremstyle{remark}
\newtheorem{thm}{Theorem}
  \theoremstyle{remark}
  \theoremstyle{remark}
  \newtheorem{cor}{Corollary}
  \theoremstyle{remark}
  \newtheorem{rem}{Remark}
  \theoremstyle{remark}
  \newtheorem{lemma}{Lemma}
  \theoremstyle{remark}
\newcommand{\vn}{\mathbf{n}}
\newcommand{\vx}{\mathbf{x}}
\newcommand{\vy}{\mathbf{y}}
\newcommand{\vl}{\mathbf{L}}
\newcommand{\e}{\mathcal{E}}
\newcommand{\bmu}{\boldsymbol{\mu}}
\newcommand{\bSigma}{\boldsymbol{\Sigma}}
\newcommand{\D}{\mathfrak{D}}
\newcommand{\Dec}{\mathfrak{Dec}}
\newcommand{\co}{\text{I}}
\newcommand{\cc}{\text{II}}
\newcommand{\ccc}{\text{III}}
\newcommand{\Sign}{\text{Sign}}
\newcommand{\argmin}{\operatornamewithlimits{argmin}}
\newcommand{\argmax}{\operatornamewithlimits{argmax}}
\begin{document}

\title{Two-way Decode-and-Forward for Energy-Efficient Wireless Relaying: Selective Forwarding versus One-bit Soft Forwarding}

\author{Qing F. Zhou, Wai Ho Mow, Shengli~Zhang and Dimitris Toumpakaris
\thanks{Qing F. Zhou is with Department of Communication Engineering, School of Computer and Information, Hefei University of Technology, and was with the Department of Electronic and Computer Engineering, Hong Kong University of Science and Technology.}
\thanks{Wai Ho Mow is with the Department of Electronic and Computer Engineering, Hong Kong University of Science and Technology, Clear Water Bay, Hong Kong S.A.R., China. }
\thanks{Shengli~Zhang is with Shenzhen University, P.R.China.}
\thanks{Dimitris Toumpakaris is with the University of Patras, Greece.}
\thanks{This work was mainly supported by the AoE grant E-02/08 from the University Grants Committee of the Hong Kong S.A.R., China; partially supported
by NSFC 61471156, 61372078 and National 973 Project 2013CB336700.} }

\maketitle
\thispagestyle{empty} 

\begin{abstract}
Motivated by applications such as battery-operated wireless
sensor networks (WSN), we propose an easy-to-implement
energy-efficient two-way relaying scheme.
In particular, we address the challenge of improving the standard two-way selective decode-and-forward protocol (TW-SDF) in terms of block-error-rate (BLER) with minor additional complexity and energy consumption. By following the principle of soft relaying, our solution is the two-way one-bit soft forwarding (TW-1bSF) protocol in which the relay forwards
the one-bit quantization of a posterior information metric about the transmitted bits, associated with an appropriately designed reliability parameter.

In WSN-related standards (such as IEEE802.15.6 and Bluetooth), block codes are adopted
instead of convolutional and other sophisticated codes, due to their efficient decoder hardware implementation. As the second main contribution, we derive tight upper bounds on the BLER performance for both TW-SDF and TW-1bSF, when the two-way relaying network employs block codes and hard decoding. The error probability analysis confirms the superiority of TW-1bSF. Moreover, we derive the asymptotic performance gain of TW-1bSF over TW-SDF, which further suggests that the proposed protocol is a good choice, especially when long block codes are used.
\end{abstract}


\section{Introduction}

Traditionally, the design of network protocols has mainly focused on how
to maximize the throughput or bandwidth efficiency. Recently, the
design of energy-efficient (a.k.a. green) wireless networks has generated considerable interest.
Instead of the power consumption related to transmission, the term
energy efficiency typically refers to the energy consumed by communication units to process
signals, including encoding and decoding operations.
 For example, in some remote monitoring applications using
wireless sensor networks (WSNs), the sensor nodes are expected to be of
low cost and to have many years of battery life. Therefore, energy
efficiency is the most critical design issue for such networks. Motivated
by such application scenarios, in this paper we propose an easy-to-implement
energy-efficient \emph{two-way} relaying scheme in which two source nodes
exchange their messages with the help of an intermediate relay node, and we evaluate its performance.

Applying the idea of network coding \cite{Li2003} to two-way relaying networks is particularly simple and fruitful. This is because network
coding over the binary field (i.e. XOR) is sufficient for this topology. Moreover, under
the setting of time-division multiple access, exchanging one pair of messages between the two nodes requires
only three packet transmissions, instead of four when using routing.
Here, the three-transmission relaying method is referred to as \emph{direct network coding} (DNC). Allowing the source nodes to broadcast simultaneously, as proposed in physical-layer network coding (PNC) \cite{Zhang2006,Liew2011} and analog network coding (ANC) \cite{Katti2007}, further reduces the number of transmissions to two, resulting in even higher spectral efficiency.
However, these two spectrally efficient schemes are not amenable for wireless networks with energy efficiency requirements because of the need for accurate synchronization, sophisticated decoding at the relay, no use of the direct link, and accurate channel estimation \cite{Gao2009a,Gao2009b}.

There are two variations of the two-way decode-and-forward protocols that are based on DNC, namely, non-selective and selective.
In the non-selective two-way decode-and-forward (DF) protocol,
the relay decodes both packets received from the two sources, and then broadcasts
the network-coded (XOR) version of the two decoded packets regardless
of whether the two decoded packets are erroneous or not. This
effect of error propagation may result in serious performance
degradation. A modification of the DF protocol is the \emph{two-way selective decode-and-forward} (TW-SDF)
protocol. In this protocol, a sufficiently strong CRC code is used to assess
the correctness of the two decoded packets at the relay. The relay only processes the two decoded packets if they have both been decoded correctly and then broadcasts
their network-coded combination. This approach effectively avoids error propagation, and can successfully recover the
loss on the diversity order of the non-selective DF protocol. The decode-and-forward scheme outperforms its demodulation-and-forward counterpart based on joint network-channel coding \cite{Li2010,Ozdemir2012b}.

In \emph{one-way} relaying networks, the SDF protocol can be further improved by using the soft relaying technique \cite{Sne:SDF1,Li:DTC-SIR,Bui:SDF2,Karim2010, Zei:Q-Design}.
Specifically, the relay uses a soft decoder to derive the
\emph{a posteriori} probabilities of the code bits, and then forwards the
soft information values. At the desired receiver, the forwarded soft information values are exploited as a priori information by source-controlled soft decoders \cite{Hagenauer1995} to improve decoding performance.
It is shown that performance improvement can be achieved over the SDF protocol with hard decoding, because in many
occasions an incorrectly decoded packet contains only a few
erroneous bits and the \emph{a posteriori} information forwarded
by the relay may help locate the position of the erroneous bits. However,
the main disadvantage of the soft relaying protocol is that
it requires significant additional bandwidth and/or power consumption, mainly due to the requirement of transmitting soft values or their multiple-bit
quantized version and because of the use of complicated soft-input soft-output
decoders. The key motivation behind our work is to simplify the soft relaying method by using only a one-bit quantized
representation of the soft information, and hence avoid the bandwidth expansion associated with soft signal forwarding and the complexity because of soft decoding. To mitigate the error propagation due to the low-rate quantization, the quantized bit message is forwarded along with a reliability parameter, which is utilized by the decoder to estimate the equivalent LLR of the message from the relay.
The so-constructed \emph{one-bit soft forwarding} (1bSF) protocol is
almost as easy to implement and as energy efficient as the SDF protocol
that employs hard decoding \cite{Li2006} (see also \cite{Krzymien2005}). In \cite{Dai2009} preliminary results regarding the effectiveness of
the protocol for one-way relay networks were obtained through computer
simulation.
In this paper, we extend the philosophy to two-way relaying systems by
proposing the two-way 1bSF (TW-1bSF) protocol.

In the literature, the performance evaluation of various relaying protocols is typically carried out using the outage probability when applying ideal capacity-approaching codes at the relay(s) \cite{Laneman2004,Zhou2009}, or via the symbol error rate (SER) in the absence of channel coding at the relay(s) \cite{Hasna2003,Ribeiro2005,Zhou2010a}. To the best of the authors' knowledge, few works have analytically derived the performance of relaying protocols that use practical channel codes. These include the loose lower bound derivation of the BER for a Turbo-coded one-way relay system \cite{Roy2006}, and the approximate derivation of the block-error-rate (BLER) for a convolutional-coded one-way relay system \cite{Amat2010}, for an end-to-end coded two-hop detect-and-forward one-way relay system \cite{Benjillali2010}, and for a demodulation-and-forward scheme with network coding but no channel coding \cite{Ozdemir2012}. However, from the viewpoint of communication engineering, a tight upper bound is of higher practical importance. To bridge this gap, in this paper we derive tight BLER upper bounds for TW-SDF and TW-1bSF, when block codes are employed and hard decoding is used at the receivers. We further derive the asymptotic performance gain of TW-1bSF
over TW-SDF, and discuss the design of the reliability parameter to improve the performance of the proposed TW-1bSF protocol.
The error probability analysis confirms the superiority of TW-1bSF. It further shows that the proposed protocol is a good choice, especially when long block codes are used.

\section{System Model and Related Work}

\begin{figure}[h]
\center\includegraphics[scale=0.4]{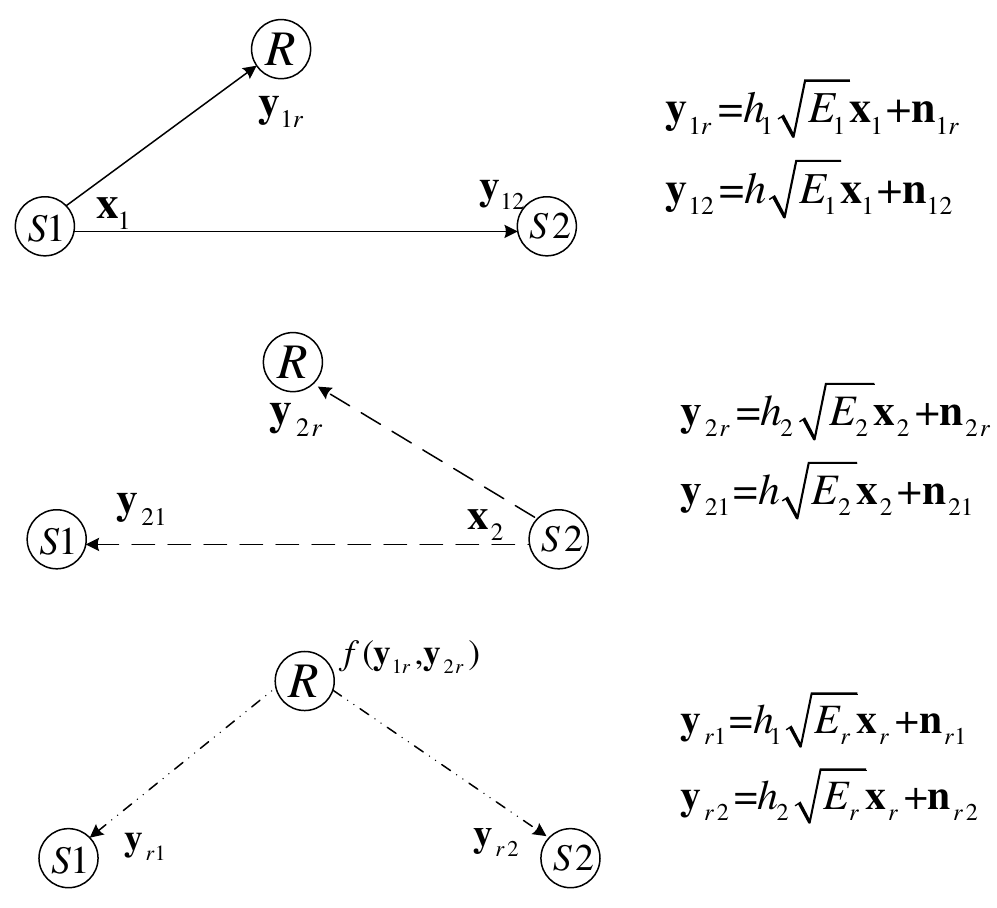}
\caption{\label{fig:system_model}Two-way relaying based on direct network coding (DNC).}
\end{figure}

%
We consider a three-node two-way wireless relaying network.
In this network, two sources $S1$ and $S2$
exchange their information messages with the help of a relay $R$. The messages are encoded to protect them against channel impairments.
Fig.\ref{fig:system_model} shows the discrete-time model of DNC-based two-way relaying over this network.
During the first time slot, $S1$ broadcasts a vector $\vx_{1}$ containing the encoded message to $R$ and $S2$,
using power $E_1$.
The received signal vectors at $R$ and $S2$ are $\vy_{1r}$ and $\vy_{12}$, respectively.
During the second time slot, $S2$ broadcasts $\vx_{2}$ using power
$E_2$. The received signal vectors at $R$ and $S1$ are $\vy_{2r}$ and $\vy_{21}$, respectively.
After receiving the signals $\vy_{1r}$ and $\vy_{2r}$, the relay generates a signal vector
$\vx_r = f(\vy_{1r},\vy_{2r})$ for forwarding, where $f(\cdot,\cdot)$ is a mapping
determined by the relaying scheme. Without loss of generality, it is assumed that the power of $\vx_r$ is
normalized to 1.
During the third time slot, $R$ broadcasts $\vx_r$ with power $E_r$.
The signal vectors received at
$S1$ and $S2$ are $\vy_{r1}$ and $\vy_{r2}$, respectively.

In this system model, $\vx_i=[x_{i,1},x_{i,2},\cdots, x_{i,n}]^T\in \mathcal{X}^{n}$ with $i\in \{1,2\}$ and $x_{i,m}\in\{+1,-1\}=\mathcal{X}$, i.e., we assume that BPSK modulation is employed. The same encoder, and therefore the same codebook is used by both sources. In addition, $\vy_{ij}=\{y_{ij,m}\}_{m=1}^n$ represents the received signal vector
at node $j\in\{1,2,r\}$ ($r$ for $R$, $1$ for $S1$,
2 for $S2$) from node $i$; $h$, $h_{1}$ and $h_{2}$ are the channel coefficients of the $S1$-$S2$ link (direct link),
the $S1$-$R$ link and the $S2$-$R$ link, respectively, and all are reciprocal; $\vn_{ij}$ are noise vectors at receiver $j$ with independent identically-distributed (i.i.d.) Gaussian entries $\sim \mathcal{N}(0,\sigma_j^2)$. We assume that the channel coefficients are fixed during the three slots of the two-way relaying transmission scheme. Moreover, as can been seen in Fig.~1, flat fading is assumed. We also assume that all noise terms have the same normalized variance $\sigma_{r}^{2}=\sigma_{1}^{2}=\sigma_{2}^{2}=N_{0}/2 =1/2$,
and that the sources utilize the same power $E_1=E_2=E$.
Note that in this system model, all terms are real-valued because of the use of BPSK modulation.

In the DNC-based two-way relaying method, it is crucial to carefully design the mapping function $ f(\vy_{1r},\vy_{2r})$ such that both messages can be decoded by their desired receivers. It is even more desirable to optimize it to achieve the best performance in terms of capacity or bit error rate. Using the concept of SDF to design $f(\vy_{1r},\vy_{2r})$ leads to the TW-SDF protocol, which is described in the following.

\subsection{Two-way SDF Relaying Protocol (TW-SDF)}
The relay $R$ decodes $\vy_{1r}$ and $\vy_{2r}$ at the end of the first and the second time slot, respectively, to generate $\hat{\vx}_{1r}$ and $\hat{\vx}_{2r}$, i.e., the estimates of $\vx_1$ and $\vx_2$. Then the relay combines $\hat{\vx}_{1r}$ and $\hat{\vx}_{2r}$ using network coding, and forwards the network-coded vector $\vx_r$ to the sources during the third time slot. If any of the $\hat{\vx}_{1r}$ and $\hat{\vx}_{2r}$ is erroneously decoded, the decoding error will propagate from $R$ to the sources, leading to performance degradation.
%
To eliminate this detrimental effect, TW-SDF does not allow
$R$ to forward erroneous information by forming the forwarded signal vector as
\begin{equation}\label{eq:xrSDF}
\vx_{r}^{\text{TW-SDF}}=
\begin{cases}
\vx_{1} \circ \vx_{2}, & \text{if }\hat{\vx}_{1r}=\vx_{1}\text{ and }\hat{\vx}_{2r}=\vx_{2} \\
\text{Nil}, & \text{otherwise}
\end{cases}
\end{equation}
where $\circ$ denotes the Hadamard product, which is the element-wise XOR operation in the modulated signal domain\footnote{Let $x_i = (-1)^{b_i}$, $b_i\in\{0,1\}$ represent BPSK modulation. The value of the XOR operation of two bits is given by $(-1)^{b_i\oplus b_j} = (-1)^{b_i}(-1)^{b_j}=x_i x_j$. So, the XOR operation in the BPSK modulated signal domain is $x_r = x_i x_j$.}. In \eqref{eq:xrSDF}, Nil represents no transmission. Note that $\hat{\vx}_{ir}=\vx_{i}$ means that no error is detected by, e.g., an error-detection code.

To demonstrate the decoding algorithm at the source nodes, we focus on the receiving process at $S2$ without loss of generality. After the third time slot, $S2$ has the received signals $\vy_{12}$ and $\vy_{r2}$. While $\vy_{12}$ is affected by Gaussian noise only, $\vy_{r2}$ is distorted by Gaussian noise and the transmitted codeword $\vx_2$ from $S2$. Since $S2$ knows the transmitted codeword of its own, it can remove the contamination caused by $\vx_2$ from $\vy_{r2}$, and generate a new detection statistic
\begin{equation}\label{eq:def-yr}
\tilde{\vy}_{r2}  \triangleq \vy_{r2}\circ\vx_{2}
=  h_{2}\sqrt{E_r}\vx_{r}^{\text{TW-SDF}}\circ\vx_{2}+ \tilde{\vn}_{r2},
\end{equation}
where $\tilde{\vn}_{r2}=\vn_{r2}\circ\vx_2$ is a vector of i.i.d. Gaussian r.v's $\sim \mathcal{N}(0,N_0/2)$. If $\hat{\vx}_{ir}=\vx_{i}$, then $\tilde{\vy}_{r2}=h_{2}\sqrt{E_r} \vx_{1} +\tilde{\vn}_{r2}$. By summing $\vy_{12}$ and $\tilde{\vy}_{r2}$ using maximal ratio combining (MRC), $S2$ can decode the codeword $\vx_1$ as
\begin{equation}\label{eq:decTWSDF}
\hat{\vx}_{1}^{\text{TW-SDF}}=
\begin{cases}
\Dec( 4h\sqrt{E}\vy_{12}+4h_{2}\sqrt{E_r}\tilde{\vy}_{r2}), &  \text{if }\hat{\vx}_{ir}=\vx_{i} \\
\Dec(4h\sqrt{E}\vy_{12}), & \text{otherwise}
\end{cases}
\end{equation}
where $\Dec$ represents a decoding algorithm.

\section{Two-way One-bit Soft Forwarding (TW-1bSF) Protocol}
In this paper, our aim is to improve the TW-SDF protocol by leveraging the soft relaying principle, while using negligible additional bandwidth and by keeping the decoding algorithm simple such that the energy efficiency requirement can be met. The main idea is to always transmit during the third time slot even when a $\hat{\vx}_{ir}$ is in error. Moreover, one indicator bit for the entire forwarded packet is transmitted. As will be shown, this leads to significant performance improvement compared to the TW-SDF protocol. Because the one-bit indicator corresponds to the entire packet of length $n$, the required additional bandwidth is negligible.

\subsection{Design of TW-1bSF}

The \emph{a posteriori} soft output for each message/code bit, typically the LLR, is composed of a sign component and a reliability component.
Using the entries of the received signal vector $\{y_{ir,m}\}_{m=1}^n$, the LLR of $x_{i,m}$ can be calculated as
\begin{equation}\label{eq:LLRdef}
\text{LLR}(x_{i,m}|y_{ir,m}) = 4h_i \sqrt{E} y_{ir,m}.
\end{equation}
In \eqref{eq:LLRdef} we have used the fact that BPSK is employed. In vector form, the LLR of $\vx_{i}$ based on $\vy_{ir}$ is
\begin{equation}
\text{LLR}(\vx_{i}|\vy_{ir}) = 4h_i \sqrt{E} \vy_{ir}.
\end{equation}
As shown in \cite{Hagenauer1995}, the LLR of the network coded vector $\vx_1 \circ \vx_2$ based on $\vy_{1r}$ and $\vy_{2r}$ is given by
\begin{equation}\label{eq:LLR_xR}
\text{LLR}(\vx_1 \circ \vx_2|\vy_{1r},\vy_{2r})  =  \left\{\Sign(y_{1r,m} y_{2r,m}) \times
   \min\{ |4h_1 \sqrt{E} y_{1r,m}| , |4h_2 \sqrt{E} y_{2r,m}| \}\right \}_{m=1}^n.
\end{equation}
Methods of broadcasting the LLR vector of \eqref{eq:LLR_xR} to both end nodes have been studied widely. This is also known as soft network coding (SoftNC) \cite{Zhang2008,Yang2007}. They are quite similar to analog network coding (ANC), and thus also suffer from large peak-to-average power ratio (PAPR), noise propagation and other drawbacks associated with analog communication. One solution is to quantize each soft value and then broadcast the quantized information using high-level modulation. However, besides quantization errors, high-level modulation incurs larger transmission power in order to meet the BER requirement when using PSK, or larger bandwidth when FSK is employed.
Our solution is to quantize each soft output of the relay decoder to one bit, exactly as in the case of TW-SDF. We then form $\vx_{r}$ as the network-coded combination of the two vectors of quantized bits. Thus, $\vx_{r}^{\text{TW-1bSF}}=\hat{\vx}_{1r}\circ\hat{\vx}_{2r}$. Unlike TW-SDF, $\vx_{r}^{\text{TW-1bSF}}$ is forwarded even when $\hat{\vx}_{1r}\neq \vx_1$ or $\hat{\vx}_{2r}\neq \vx_2$.

Furthermore, $R$ generates a scalar value to measure the reliability of the network-coded vector, and then broadcasts it to the receivers (similar to the equivalent SNR approach in \cite{Amat2010}), or the receivers calculate the reliability value by themselves. Each receiver uses $\vy_{ri}$ and the associated reliability value as \emph{a priori} information to enhance the decoding performance of the signal vector received through the direct link. Thus the name \emph{two-way one-bit soft forwarding} (TW-1bSF) for the protocol that uses this relaying approach.

We note here that we can implement the TW-1bSF protocol in several ways.
However, in this paper, we use the simple implementation method described above, i.e., $\vx_{r}^{\text{TW-1bSF}}=\hat{\vx}_{1r}\circ\hat{\vx}_{2r}$ is always forwarded during the third time slot, unlike TW-SDF. An indicator bit is embedded in the overhead of the forwarded frame as, for example, in the COPE protocol \cite{Katti2006, Tran2010}. Then the receivers can calculate the reliability value based on the indicator bit and the channel information.

\subsection{Design of reliability value}

If $\vx_1$ can be successfully decoded from either of $\vy_{12}$ and $\tilde{\vy}_{r2}$, then the transmission succeeds. So, in the following we consider the detection of $\vx_1$ at $S2$ based on both $\vy_{12}$ and $\tilde{\vy}_{r2}$, assuming that decoding is not successful from either of them. The aim of TW-1bSF is to make $\vy_{r2}$ useful even if an error has occurred in $\hat{\vx}_{1r}$ or $\hat{\vx}_{2r}$. Let $\vl_{r2}$ be the LLR vector of the target message $\vx_1$ in TW-1bSF obtained by $\tilde{\vy}_{r2}$, i.e., $\vl_{r2} = \text{LLR}(\vx_1|\tilde{\vy}_{r2})$. By regarding $\vl_{r2}$ as the LLR of the $S1-R-S2$ path and using the decoding method of \cite{Hagenauer1995}, $S2$ can decode using the received signal vector $\vy_{12}$ as follows
\begin{align}\label{eq:optimal-decoder}
 \hat{\vx}_{1}
 = &  \argmax_{\vx\in\mathcal{X}^{c(n)}}\left(\langle4h\sqrt{E}\vy_{12},\vx\rangle+\langle \vl_{r2},\vx\rangle \right)   \cr
 = & \argmax_{\vx\in\mathcal{X}^{c(n)}}\left(4h\sqrt{E}\vy_{12}^T \vx+ \vl_{r2}^T \vx\right)   \cr
 = & \argmin_{\vx\in\mathcal{X}^{c(n)}}\left(- 4h\sqrt{E}\vy_{12}^T \vx - \vl_{r2}^T \vx\right)   \cr
 = & \argmin_{\vx\in\mathcal{X}^{c(n)}}(4h\sqrt{E}\vy_{12} + \vl_{r2}-\vx)^T (4h\sqrt{E}\vy_{12} + \vl_{r2}-\vx) \cr
 = & \argmin_{\vx\in\mathcal{X}^{c(n)}}\Vert 4h\sqrt{E}\vy_{12} + \vl_{r2}-\vx\Vert,
\end{align}
where $\langle \cdot,\cdot \rangle $ represents the inner product, $\Vert \cdot \Vert$ represents the $l_2$ norm, and $(\cdot)^T$ represents the matrix transposition.

Next, to design $\vl_{r2}$ we distinguish three different cases, depending on whether a decoding error occurs for $\hat{\vx}_{1r}$ and $\hat{\vx}_{2r}$ at $R$, and whether the forwarded network-coded packet $\vx_r$ is correctly decoded at the receiver.

Let $\mathfrak{D}(\vy)$ be a hard-output decoder. Then $\mathfrak{D}(\vy)\in \mathcal{X}^{n} $ for $\forall \vy$. We define the following erroneous decoding events
\begin{equation}\label{eq:err.events}
\begin{aligned}
\e\triangleq \{\hat{\vx}_1\neq \vx_1\}, \quad \e_{12}\triangleq \{\mathfrak{D}(\vy_{12})\neq \vx_1\},
\quad \e_{1r}\triangleq \{\hat{\vx}_{1r}\neq \vx_1\}, \\
\quad
\e_{2r}\triangleq \{\hat{\vx}_{2r}\neq \vx_2\},
\quad \e_{r2}\triangleq \{\mathfrak{D}(\tilde{\vy}_{r2})\neq \vx_r\circ \vx_2\}.
\end{aligned}
\end{equation}
In addition, let $A^c$ denote the complement of an error event $A$, i.e. the event of correct decoding, for example, $\e^c=\{\hat{\vx}_1 = \vx_1\}$.

In the following, we introduce the three cases and the corresponding design of $\vl_{r2}$.
\begin{itemize}
\item \textbf{Case I}: This is the case where $\hat{\vx}_{1r}$ and $\hat{\vx}_{2r}$ are both correctly decoded, i.e., $\hat{\vx}_{1r}=\vx_{1}$ and $\hat{\vx}_{2r}=\vx_{2}$, but the forwarded signal $\vx_r$ is not correctly decoded at $S2$, i.e., $\mathfrak{D}(\vy_{r2})\neq \vx_r$. It is denoted by $\Theta_\co\triangleq \e_{12}\cap \e^c_{1r}\cap \e^c_{2r}\cap \e_{r2}$. In this case, from \eqref{eq:def-yr}, $\tilde{\vy}_{r2} = h_{2}\sqrt{E_r}\vx_{1}+\tilde{\vn}_{r2}$, so we simply have
\begin{equation}\label{eq:Lr2CaseI}
\vl_{r2}=\left\{\text{LLR}(x_{1,m}|\tilde{y}_{r2,m})\right\}_{m=1}^n =4h_2\sqrt{E_r}\tilde{\vy}_{r2}.
\end{equation}
Comparing \eqref{eq:Lr2CaseI} with \eqref{eq:decTWSDF} shows that in this case TW-1bSF and TW-SDF have the same decoding results.
\item \textbf{Case II}: This is the case where at least one of $\hat{\vx}_{1r}$ and $\hat{\vx}_{2r}$ is incorrectly decoded, i.e., $\e_{1r}\cup \e_{2r}$, but $\vx_r$ is correctly decoded at $S2$, and is denoted by $\Theta_\cc \triangleq \e_{12}\cap(\e_{1r}\cup \e_{2r})\cap\e^c_{r2}$. In this case, we let
\begin{equation}\label{eq:Lyr-caseII}
\vl_{r2}=(\mathfrak{D}(\vy_{r2})\circ \vx_2)\mathcal{L} = (\vx_r\circ \vx_2)\mathcal{L},
\end{equation}
where $\mathcal{L}$ is a scalar value measuring the reliability of the forwarded vector $\vx_r$. This case can be further decomposed into three disjoint subcases as $\Theta_\cc  = \Theta_{\cc a}\cup \Theta_{\cc b} \cup \Theta_{\cc c}$, where $\Theta_{\cc a}  \triangleq \e_{12}\cap \e_{1r}\cap \e^c_{2r}\cap\e^c_{r2}$, $\Theta_{\cc b}  \triangleq \e_{12}\cap \e^c_{1r}\cap \e_{2r}\cap\e^c_{r2}$ and $\Theta_{\cc c}  \triangleq \e_{12}\cap \e_{1r}\cap \e_{2r}\cap\e^c_{r2}$.
\item \textbf{Case III}: This is the case where at least one of $\hat{\vx}_{1r}$ and $\hat{\vx}_{2r}$ is incorrectly decoded, and $\vx_r$ is not correctly decoded at $S2$ either. It is denoted by $\Theta_{\ccc}\triangleq \e_{12}\cap(\e_{1r}\cup \e_{2r})\cap \e_{r2}$. In this case, we let
\begin{equation}
\vl_{r2}=\mathfrak{D}(\tilde{\vy}_{r2})\mathcal{L}^-,
\end{equation}
where $\mathcal{L}^-$ is the scalar reliability value for Case III.
\end{itemize}
%
In summary,
\begin{equation}
\vl_{r2}=
\begin{cases}
4h_2\sqrt{E_r}\tilde{\vy}_{r2}, & \text{for Case I} \\
(\vx_r\circ \vx_2)\mathcal{L}, & \text{for Case II} \\
\mathfrak{D}(\tilde{\vy}_{r2})\mathcal{L}^-, & \text{for Case III}
\end{cases}.
\end{equation}

We then design $\mathcal{L}$ and $\mathcal{L}^-$. According to \eqref{eq:LLR_xR}, the scalar reliability value $\mathcal{L}$ of Case II should be a function of the reliability values $\{ 4\sqrt{E} \min\{ |h_1 y_{1r,m}| , |h_2 y_{2r,m}| \}\}_{m=1,2,\dots,n}$. Note that
\begin{equation}\label{eq:min_LLR}
4\sqrt{E} \min\{ |h_1 y_{1r,m}| , | h_2 y_{2r,m} | \}
=  \min \left\{  \log \tfrac{\exp(-\frac{(y_{1r,m}-h_1\sqrt{E})^2}{2\sigma_r^2}) }
{\exp(-\frac{(y_{1r,m}+h_1\sqrt{E})^2}{2\sigma_r^2}) } ,
\log \tfrac{\exp(-\frac{(y_{2r,m}-h_2\sqrt{E})^2}{2\sigma_r^2}) }
{\exp(-\frac{(y_{2r,m}+h_2\sqrt{E})^2}{2\sigma_r^2}) } \right\}.
\end{equation}
We have a few options on choosing $\mathcal{L}$, such as taking the arithmetic average of the reliability values. Alternatively, simply exploiting the relationship between the LLR and the raw bit error rate,
\begin{equation}\label{eq:sum_LLR}
\frac{1-p_{1r}}{p_{1r}}=
\tfrac{\int_0^{\infty}\tfrac{1}{\sqrt{2\pi \sigma_r^2}}\exp\left(-\frac{(y_{1r,m}-h_1\sqrt{E})^2}{2\sigma_r^2}\right) \text{d} y_{1r,m} }
{\int_0^{\infty}\tfrac{1}{\sqrt{2\pi \sigma_r^2}} \exp\left(-\frac{(y_{1r,m}+h_1\sqrt{E})^2}{2\sigma_r^2}\right) \text{d} y_{1r,m} },
\end{equation}
and we can set the reliability value for Case II as
\begin{equation}\label{eq:rel-old}
\mathcal{L}=\min\left(\log\frac{1-p_{1r}}{p_{1r}},\log\frac{1-p_{2r}}{p_{2r}}\right),
\end{equation}
where $p_{1r}=Q(\sqrt{2h_{1}^{2}E})$ is the channel bit error rate
of the $S1$-$R$ link, $p_{2r}=Q(\sqrt{2h_{2}^{2}E})$ is
the channel bit error rate of the $S2$-$R$ link, and $Q(\cdot)$ is the Gaussian Q function.
In \eqref{eq:min_LLR}, the likelihood is measured using the
probability density function,  whereas in \eqref{eq:sum_LLR} the likelihood is given by
the cumulative density function, which is unrelated to the individual received symbols $y_{1r,m}$ (or $y_{2r,m}$).

Using the same principle as in \eqref{eq:rel-old}, the reliability value for Case III can be set to
\begin{equation}\label{eq:rel-case3}
\mathcal{L}^-  =  \log\frac{1-(1-(1-p_{1r})(1-p_{2r}))}{1-(1-p_{1r})(1-p_{2r})}
 =  \log\frac{(1-p_{1r})(1-p_{2r})}{1-(1-p_{1r})(1-p_{2r})},
\end{equation}
where $1-(1-p_{1r})(1-p_{2r})$ is the probability that at least one of the $S1$-$R$ and the $S2$-$R$ links is in error. Here we choose $\mathcal{L}^-$ intuitively. In Section V we will discuss the impact of the choice of $\mathcal{L}$ and $\mathcal{L}^-$ on the error performance.

\begin{rem}
As in COPE \cite{Katti2006}, the end nodes use an indicator bit embedded in the overhead and the CRC associated with $\vx_{r}^{\text{TW-1bSF}}$ to identify the cases that the forwarded message belongs to. Specifically, if the indicator bit is 1, it signals Case I, otherwise Case II or Case III. When the indicator bit is 0, if the received signal vector $\vy_{r2}$ passes the CRC check, then the destination proceeds according to Case II, otherwise Case III. Here we assume that each end node knows the channel coefficients $h_1$ and $h_2$. Hence, it can obtain $\mathcal{L}$ and $\mathcal{L}^-$ by calculating $p_{1r}$ and $p_{2r}$ from the link coefficients and then substituting them into \eqref{eq:rel-old} and \eqref{eq:rel-case3}.
\end{rem}

\subsection{Selection of the Channel Code and the Decoder}
Many channel codes are applicable to the proposed TW-1bSF protocol.
However, for WSNs with limited energy and computation capability, simple block codes like Hamming or BCH codes are preferable. In particular, BCH
codes with a syndrome decoder that uses the Berlekamp-Massey (BM) and Chien's search (CS) algorithm are 15\%
more energy efficient than the best performing convolutional codes
\cite{Sankarasubramaniam2003}. Therefore, in the sequel, we study the TW-1bSF protocol when Hamming or BCH codes are applied, and hard decoders are employed at the receiving ends. Consequently, the decoder \eqref{eq:optimal-decoder} can be rewritten as
\begin{equation}
\hat{\vx}_{1}^{\text{TW-1bSF}}=\mathfrak{D}(\vl_d),
\end{equation}
where $\vl_d \triangleq 4h\sqrt{E}\vy_{12} + \vl_{r2}$.

For most practical block codes, the minimum distance is an odd number, conventionally denoted by $2t+1$, where $t$ is known as the correcting capacity. Therefore, in the following analysis we consider block codes with minimum distance $2t+1$.

\section{\label{sec:Performance-Analysis}Performance Analysis of TW-SDF and TW-1bSF}
This section analyzes the performance of the TW-SDF and the TW-1bSF protocols in terms of block error rate (BLER). It is assumed that the messages are encoded using block codes, whereas the receiving sources employ hard decoders, as described in Section~III.

Without loss of generality, we consider the detection of $\vx_1$ at $S2$. We derive the BLER of the codewords transmitted from $S1$ to $S2$, defined as
\begin{equation}
P^{(\vx_1)}\triangleq \Pr(\hat{\vx}_{1}\neq\vx_{1}).
\end{equation}

\subsection{Background}
Recall that in the two-way relaying network, there are four point-to-point links associated with the transmission of $\vx_{1}$, i.e., $S1\rightarrow S2$, $S1\rightarrow R$, $S2\rightarrow R$ and $R\rightarrow S2$, denoted as $\{12,1r,2r,r2\}$, respectively. We first derive the BLER of these links based on their raw bit error rates  $p_{12}=Q(\sqrt{2h^2E})$, $p_{1r}=Q(\sqrt{2h_1^2E})$, $p_{2r}=Q(\sqrt{2h_2^2E})$ and $p_{r2}=Q(\sqrt{2h_2^2E_r})$. No matter which specific hard decoding method is applied, we can upper and lower bound the BLER of a link by using the performance of the \emph{bounded distance decoder} (BDD). Assuming that a decoding error occurs if the received word falls \emph{outside} the decoding sphere of the \emph{transmitted} codeword \cite{Moon2005}, we obtain the upper bound
\begin{equation}
\stackrel{_\frown}{P}_i=1-\sum_{k=0}^{t}\binom{n}{k}p_i^{k}(1-p_i)^{(n-k)},
\label{eq:BLER-upper}
\end{equation}
where $i\in \{12,1r,2r,r2\}$.
On the other hand, assuming that a decoding error occurs only if the received words fall \emph{inside} the decoding spheres of the codewords that have not been transmitted, we obtain the lower bound
\begin{equation}
\stackrel{_\smile}{P}'_i  = \sum_{k=2t+1}^{n}\beta'_{k}(p_i)
 =  \sum_{k=2t+1}^{n}A_{k}\sum_{m=0}^{t}\sum_{j=0}^{\min(m,n-k)}\binom{k}{m-j}\binom{n-k}{j}
\times p_i^{k-m+2j}(1-p_i)^{n-k+m-2j},
\label{eq:BLER-lower}
\end{equation}
where $A_k$ is the number of the codewords with Hamming weight $k$
; $\beta'_k(p_i)$ lower bounds the probability that the codeword to which the received word is decoded has Hamming distance $k$ from the transmitted codeword.
Note that in \eqref{eq:BLER-lower} we have assumed, without loss of generality, that the all-zero codeword has been transmitted.
However, except for perfect codes, the lower bound \eqref{eq:BLER-lower} is too loose to accurately predict the BLER of any practical hard decoder. This is because it does not take into account undecodable sequences that are outside the Hamming spheres of any codewords. Among the dominant part of the undecodable received words, i.e., the $\binom{n}{t+1}$ received words of weight $t+1$, the lower bound \eqref{eq:BLER-lower} only accounts for $A_{2t+1}\binom{2t+1}{t}$ words that are inside the Hamming spheres of the codewords of weight $2t+1$.

\subsection{A tighter lower bound}
Here, we propose a lower bound that is tighter than \eqref{eq:BLER-lower}. Suppose, again, that the all-zero codeword is transmitted. Since the correcting capability is $t$, the received words of Hamming weight $t+1$ are not guaranteed to be correctly decoded, and which codewords they will be decoded to depends on the decoding method. We denote the number of the received words of weight $t+1$ that are decoded to wrong codewords of weight $k$ by $W_k$. We call $W_k$, $2t+1\leq k \leq n$ the \emph{sphere partitioning function} (SPF). The SPF partitions the surface of the radius-$(t+1)$ sphere centered at the transmitted zero codeword. Note that $W_0 \neq 0$ for almost all codes and hard decoders. If $W_k$ for a given hard decoder $\mathfrak{D}(\cdot)$ is known,
the following lemma provides a tighter lower bound on the BLER of the hard decoder.
\begin{lemma}[\emph{Intrinsic Lower Bound of Hard Decoder}]\label{lemma:lowerbd.of.UBA}
Given a hard decoder $\mathfrak{D}(\cdot)$ of SPF $W_k$ and a link of raw BER $p_i$, the BLER of the link when $\mathfrak{D}(\cdot)$ is used is lower bounded by \eqref{eq:de-BLER-lower}.
\begin{figure*}
\begin{equation}
\stackrel{_\smile}{P}_i =  \underset{\beta_{2t+1}(p_i)}{\underbrace{\beta'_{2t+1}(p_i) + \left(W_{2t+1} - A_{2t+1}\binom{2t+1}{t}\right)p_i^{t+1}(1-p_i)^{n-t-1}}}
+ \sum_{k=2t+2 }^n \underset{\beta_k(p_i)}{\underbrace{\beta'_k(p_i) +
W_k p_i^{t+1}(1-p_i)^{n-t-1}}}
\label{eq:de-BLER-lower}
\end{equation}
\end{figure*}
\end{lemma}
\begin{IEEEproof}
Given the transmission of the zero codeword, consider all the received words of weight $t+1$. If the hard decoder's SPF $W_k$ is known, it is clear that except for the $A_{2t+1}\binom{2t+1}{t}$ words included in \eqref{eq:BLER-lower}, additional $W_{2t+1} - A_{2t+1}\binom{2t+1}{t}$ words are decoded to wrong codewords of weight $2t+1$, and $\sum_{k=2t+2}^n W_k$ words are decoded to wrong codewords of weight $k > 2t+1$. This concludes the proof of the lemma.
\end{IEEEproof}
Note that with $\beta_k(p_i)$ defined as in \eqref{eq:de-BLER-lower}, $\beta_k(p_i)>\beta'_k(p_i)$, $k\in\{2t+1,\cdots,n\}$, so the lower bound \eqref{eq:de-BLER-lower} is tighter than \eqref{eq:BLER-lower}.

\subsection{Preliminary results}
Recall that all receivers utilize the same energy-efficient hard decoder with SPF $W_k$.
%
%
Due to the equiprobability of the codewords and the linearity of block codes, the BLER of $\vx_1$ at $S2$ equals the conditional BLER given the transmission of $\vx_1=\vx_2=\{+1\}_{m=1}^n=\mathbf{1}$, which are the BPSK modulated signal vectors that correspond to the all-zero codewords. 
The following analysis is carried out under the assumption that $\vx_1=\vx_2=\mathbf{1}$, which is omitted from now on for simplicity.

By decomposing TW-SDF transmission into disjoint cases, its BLER can be expressed as
\begin{equation}\label{eq:PB-SDF}
P_{\text{TW-SDF}}^{(\vx_1)}  = P(\e,\Theta_{\co})+P(\e,\Theta_{\cc}\cup\Theta_{\ccc})
 = P(\e,\Theta_{\co})+P(\Theta_{\cc}\cup\Theta_{\ccc}).
\end{equation}
Similarly, for TW-1bSF,
\begin{equation}\label{eq:PB-1bSF}
P^{(\vx_1)}_{\text{TW-1bSF}} =  P(\e,\Theta_\co\cup \Theta_\cc \cup \Theta_\ccc)
= P(\e,\Theta_\co)+P(\e,\Theta_\cc)+P(\e,\Theta_\ccc).
\end{equation}

We define $d(\vy)\triangleq \mid \{\tau\in\{1,2,\cdots,n\}:y_{\tau} <0\}\mid $ as the number of negative elements in a vector (or scalar) $\vy$. Because $\vx_1=\vx_2=\mathbf{1}$, the following three properties will be used in the following BLER analysis.
\begin{enumerate}
\item A raw bit is \emph{in error} if the corresponding signal $y_{\tau}<0$ for $\tau\in\{1,\cdots,n\}$, i.e., $d(y_{\tau})=1$;
\item If the received signal vector $\vy$ is decoded erroneously, then the decoded codeword has weight at least $2t+1$, i.e., $d(\D(\vy))\geq 2t+1$;
\item When a Bounded Distance Decoder (BDD) is used at receiver $i$, $\vy_i$ is correctly decoded iff $d(\vy_i)\leq t$.
\end{enumerate}
To distinguish from the events in \eqref{eq:err.events}, if a BDD is used, we denote the decoding error event at the receiver of link $i$ as $\grave{\e}_i$. Then the probability of this error event is $P(\grave{\e}_i) = \stackrel{_\frown}{P}_i $ as given by \eqref{eq:BLER-upper}. Also denote
\begin{equation}
\grave{\Theta}_{j}\triangleq \{\Theta_{j} : \text{$\D$ is a BDD for link $i\in\{12,1r,2r,r2\}$} \}
\end{equation}
with $j\in\{\co,\cc,\ccc\}$. Specifically, $\grave{\Theta}_\co =  \grave{\e}_{12}\cap \grave{\e}^c_{1r}\cap \grave{\e}^c_{2r}\cap \grave{\e}_{r2}$, $ \grave{\Theta}_{\cc} = \grave{\e}_{12}\cap(\grave{\e}_{1r}\cup \grave{\e}_{2r})\cap\grave{\e}^c_{r2}$ and $ \grave{\Theta}_{\ccc} = \grave{\e}_{12}\cap(\grave{\e}_{1r}\cup \grave{\e}_{2r})\cap\grave{\e}_{r2}$.

We start with the analysis of $P(\e,\Theta_{\co}) $, since it is the common term of $P_{\text{TW-SDF}}^{(\vx_1)}$ and $P_{\text{TW-1bSF}}^{(\vx_1)}$. The event $\Theta_{\co}$ is decoder-dependent, and its probability is usually hard to determine, if not impossible. Alternatively, we examine its worst-decoder counterpart, $\grave{\Theta}_{\co}$, and derive a upper bound of $P(\e,\grave{\Theta}_{\co}) $.
From the definition of $\grave{\Theta}_{\co}$, and the fact that the BDD is the worst hard decoder in terms of error performance, we have
\begin{equation}\label{eq:bler-caseI}
 P(\e,\grave{\Theta}_{\co})
\leq P(\grave{\e},\grave{\Theta}_{\co})  =P(\grave{\Theta}_{\co})-P(\grave{\e}^c,\grave{\Theta}_{\co})
 = \stackrel{_\frown}{P}_{12}(1-\stackrel{_\frown}{P}_{1r})
(1-\stackrel{_\frown}{P}_{2r})\stackrel{_\frown}{P}_{r2}-P(\grave{\e}^c,\grave{\Theta}_{\co}).
\end{equation}
The term $P(\grave{\e}^c,\grave{\Theta}_{\co})$ in \eqref{eq:bler-caseI} is given by the following theorem.
\begin{thm}\label{thm:ineq-noe-CaseI}
$P(\grave{\e}^c,\grave{\Theta}_{\co})$
can be calculated as \eqref{eq:lb-noe-CaseI},
\begin{eqnarray}\label{eq:lb-noe-CaseI}
P(\grave{\e}^c,\grave{\Theta}_{\co}) & \triangleq &    (1-\stackrel{_\frown}{P}_{1r})(1-\stackrel{_\frown}{P}_{2r})\sum_{k=t+1}^{n-1}\binom{n}{k} \sum_{m=\max(0,t+1-(n-k))}^{t}\sum_{i=t+1}^{m+n-k}\binom{k}{m}\binom{n-k}{i-m}
 p_{\co}^{n-k-(i-m)}q_{\co}^{m}\nonumber \\
 &  & \makebox[3cm]{} \times\sum_{g=0}^{t-m}\sum_{j=0}^{g}\binom{k-m}{j}
\dot{p}_{\co}^{j}\dot{q}_{\co}^{k-m-j}\binom{i-m}{g-j}\ddot{p}_{\co}^{g-j}\ddot{q}_{\co}^{i-m-(g-j)}.
\end{eqnarray}
where
\begin{equation}\label{eq:p_co_eqs}
\begin{cases}
 p_{\co}  =  \left(1-p_{r2}\right)\left(1-p_{12}\right), \quad
 q_{\co} =  p_{12}p_{r2}, \quad
\dot{p}_{{\co}} = \text{mvnCDF}([0,0],\bmu_1,\bSigma_1), \\
 \dot{q}_{\co} = p_{r2}(1-p_{12})-\dot{p}_{\co}, \quad
\ddot{p}_{\co} = \text{mvnCDF}([0,0],\bmu_2,\bSigma_2),  \quad
 \ddot{q}_{\co} =  p_{12}(1-p_{r2})-\ddot{p}_{\co}.
\end{cases}
\end{equation}
In \eqref{eq:p_co_eqs}, $\text{mvnCDF}(\cdot,\cdot,\cdot)$ represents the \emph{multivariate normal cumulative distribution function} using Matlab notation, with mean vectors $\bmu_1=[a^2 + b^2, -b]$ and
$\bmu_2 = [a^2 + b^2, -a]$, and covariance matrices
$\bSigma_1 = \tfrac{1}{2}[a^2+b^2, -b; -b,1]$ and $\bSigma_2 = \tfrac{1}{2}[a^2+b^2, -a; -a,1]$, in which $a=h_2\sqrt{E_r}$ and $b=h\sqrt{E}$.
\end{thm}
\begin{IEEEproof}
Recall that $\grave{\Theta}_\co =  \grave{\e}_{12}\cap \grave{\e}^c_{1r}\cap \grave{\e}^c_{2r}\cap \grave{\e}_{r2}$. Because of $\grave{\e}^c_{1r}$ and $\grave{\e}^c_{2r}$, $\tilde{\vy}_{r2} = h_{2}\sqrt{E_r}\vx_{1}+\tilde{\vn}_{r2}$ according to \eqref{eq:def-yr}, so $\grave{\e}_{r2}$ means that $\tilde{\vy}_{r2}$ has $d(\tilde{\vy}_{r2})= k \geq t+1$ bits in error. Similarly, $\grave{\e}_{12}$ means that $\vy_{12}$ has $i\geq t+1$ bits in error.
Recall that in Case I, $\vl_d = 4h\sqrt{E}\vy_{12}+4h_2\sqrt{E_r}\tilde{\vy}_{r2}$. Using the fact that $P(\grave{\e}^c, \grave{\Theta}_{\co})= P(d(\vl_d)\leq t, \grave{\Theta}_{\co})$, we have
\begin{equation}\label{eq:ineq-noe-CaseI}
P(\grave{\e}^c,\grave{\Theta}_{\co})  = (1-\stackrel{_\frown}{P}_{1r})(1-\stackrel{_\frown}{P}_{2r})\sum_{k=t+1}^{n-1}\binom{n}{k}
 P(d(\vl_d)\leq t,d(\tilde{\vy}_{r2})=k,d(\vy_{12})\geq t+1).
\end{equation}
Note that here $k=n$ is omitted because $P(d(\vl_d)\leq t,d(\tilde{\vy}_{r2})=n,d(\vy_{12})\geq t+1 ) =0$. A detailed expression for \eqref{eq:ineq-noe-CaseI} is derived in Appendix~\ref{app:proof:theorem:caseI}, which proves the theorem.
\end{IEEEproof}
As to the numerical evaluation of the lower bound \eqref{eq:lb-noe-CaseI}, it is accurate enough to calculate only its first few terms, because the first few terms correspond to the received words containing a small number of erroneous bits (referring to $k$ and $i$ in Fig.~\ref{fig:case1-block}) and thus dominate the error event. This truncated calculation results in the obtained value being slightly loose, but the difference is negligible, and the truncated results are still valid lower bounds. In particular, our experiments show that the calculation based on truncation gives accurate results when truncating at $k,i=10$ when the code length $n\leq 127$.

\begin{figure}
\center\includegraphics[scale=0.45]{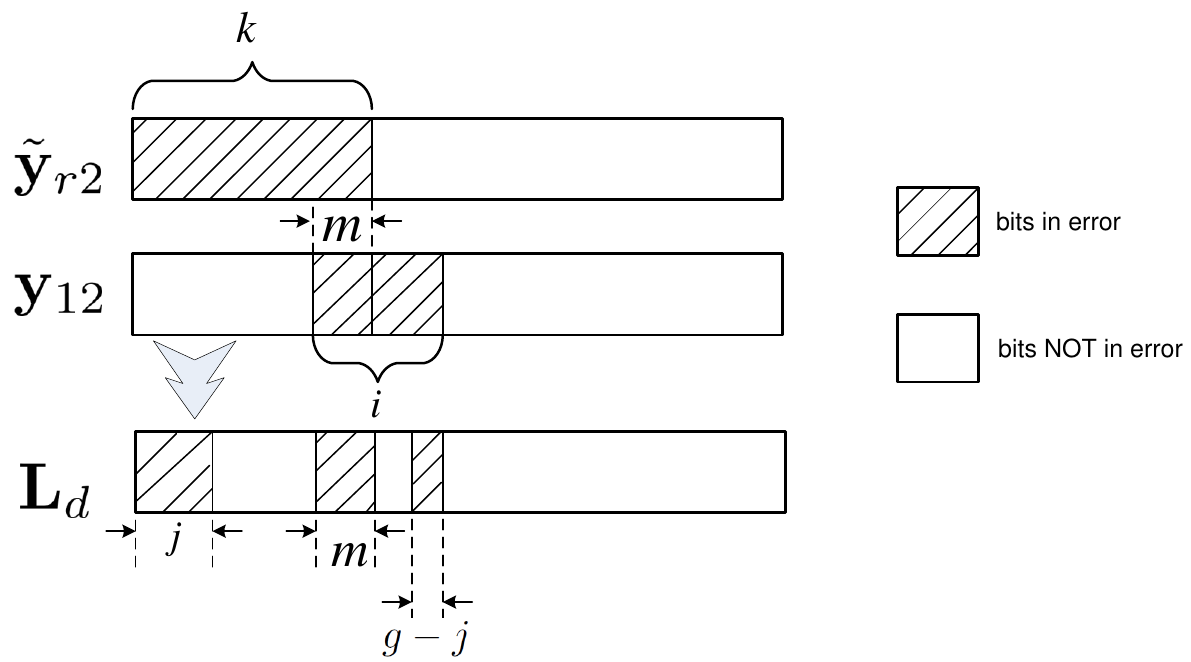}
\caption{\label{fig:case1-block} The alignment between $\tilde{\vy}_{r2}$ and $\vy_{12}$ for decoding $\vl_d=4h\sqrt{E}\vy_{12}+4h_2\sqrt{E_r}\tilde{\vy}_{r2}$ in Case I. The erroneous channel bits are represented by shadowed boxes, whereas the correct channel bits are represented by blank boxes.}
\end{figure}

\subsection{Performance of TW-SDF}
The \emph{one-way} SDF protocol has been well studied in terms of outage probability, symbol error rate and ergodic capacity for the scenario where no channel coding is applied at the relays, but few results are available for scenarios where channel coding is employed by the relaying networks. In \cite{Amat2010}, a pair of loose BLER bounds is derived for the cooperative DF protocol \cite{Wang2007} when convolutional codes and \emph{soft} decoders are used. In this section, we contribute new results for relaying networks that use channel coding by deriving a tight upper bound for the DNC-based TW-SDF protocol that uses block codes and \emph{hard} decoders.
%
\begin{thm}
\label{thm:BLER_SDF_upb}
Suppose that all nodes apply a hard decoder whose performance is at least as good as BDD. Then
the BLER of the TW-SDF protocol can be upper bounded as
\begin{equation}\label{eq:bler-SDF}
P_{\text{TW-SDF}}^{(\vx_1)} \leq  \stackrel{_\frown}{P}_{\text{TW-SDF}}^{(\vx_1)} \triangleq
P(\grave{\Theta}_{\cc}\cup \grave{\Theta}_{\ccc}) + P(\grave{\e},\grave{\Theta}_{\co}),
\end{equation}
where  $P(\grave{\e},\grave{\Theta}_{\co})$ is given by \eqref{eq:bler-caseI}.
\end{thm}
\begin{IEEEproof}
See Appendix~\ref{app:proof:thm:BLER_SDF_upb}.
\end{IEEEproof}

Note that $\Theta_{\co} \nsubseteq \grave{\Theta}_{\co}$, so the proof of inequality \eqref{eq:bler-SDF} is not straightforward. Also note that
the upper bound $\stackrel{_\frown}{P}_{\text{TW-SDF}}^{(\vx_1)} $ is not related with the sphere partition function (SPF) $W_k$; thus it is decoder-independent.

From the definitions $ \grave{\Theta}_{\cc} = \grave{\e}_{12}\cap(\grave{\e}_{1r}\cup \grave{\e}_{2r})\cap\grave{\e}^c_{r2}$ and $ \grave{\Theta}_{\ccc} = \grave{\e}_{12}\cap(\grave{\e}_{1r}\cup \grave{\e}_{2r})\cap\grave{\e}_{r2}$, $P(\grave{\Theta}_{\cc} \cup \grave{\Theta}_{\ccc}) = P(\grave{\Theta}_{\cc}) + P(\grave{\Theta}_{\ccc})$ in \eqref{eq:bler-SDF} can be derived using $  P(\grave{\Theta}_{\cc}) = \stackrel{_\frown}{P}_{12}(\stackrel{_\frown}{P}_{1r}+\stackrel{_\frown}{P}_{2r} - \stackrel{_\frown}{P}_{1r}\stackrel{_\frown}{P}_{2r})(1-\stackrel{_\frown}{P}_{r2})$ and $ P(\grave{\Theta}_{\ccc}) = \stackrel{_\frown}{P}_{12}(\stackrel{_\frown}{P}_{1r}+\stackrel{_\frown}{P}_{2r} - \stackrel{_\frown}{P}_{1r}\stackrel{_\frown}{P}_{2r})\stackrel{_\frown}{P}_{r2}$. Simulation results in Section~V will verify the tightness of the upper bound of Theorem~\ref{thm:BLER_SDF_upb}.

\subsection{Performance of TW-1bSF}

 In general, deriving the exact BLER performance of the TW-1bSF protocol for a given hard decoder is a formidable task, except for some specific cases of special hard decoders like BDD. Alternatively, we resort to deriving upper bounds on the performance. We can simply universally upper bound $P^{(\vx_1)}_{\text{TW-1bSF}}=P(\e, \Theta_\co\cup \Theta_\cc\cup \Theta_\ccc)$ by  $\max_{\D}P(\e, \Theta_\co\cup \Theta_\cc\cup \Theta_\ccc)=P(\e, \Theta_\co\cup \Theta_\cc\cup \Theta_\ccc)|_{\D=\text{BDD}}$, in which all links and $\vl_d$ are decoded using BDD. However, this upper bound is very loose for most practical decoders. In the following, we derive a tighter upper bound.
\begin{lemma}[\emph{Upper Bound for the BLER of TW-1bSF}]
Given any hard decoder, we have
\begin{equation}\label{eq:e.event.bounds}
P^{(\vx_1)}_{\text{TW-1bSF}}  =  P(\e, \Theta_\co\cup \Theta_\cc\cup \Theta_\ccc )
  \leq   P(\e,\grave{\Theta}_{\co}) + P(\e,\grave{\Theta}_{\cc a})
+ P(\e,\grave{\Theta}_{\cc b}) + P(\grave{\Theta}_{\cc c} ) + P(\grave{\Theta}_{\ccc} ).
\end{equation}
\end{lemma}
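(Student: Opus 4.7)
The plan is to prove the upper bound by a three-step manipulation: a disjoint decomposition of the event, dropping the $\e$ factor in the two ``worst'' subcases, and finally replacing each specific-decoder event by its BDD counterpart.

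First, I would invoke the disjoint decomposition $\Theta_{\cc}=\Theta_{\cc a}\cup \Theta_{\cc b}\cup\Theta_{\cc c}$ already announced just above the lemma, together with the pairwise disjointness of $\Theta_\co$, $\Theta_\cc$, and $\Theta_\ccc$ (they are distinguished by the mutually exclusive combinations of $\e_{1r}\cup\e_{2r}$ vs.\ $\e_{1r}^c\cap\e_{2r}^c$, and of $\e_{r2}$ vs.\ $\e_{r2}^c$). By additivity of probability over disjoint events, this yields
\begin{equation*}
P(\e,\Theta_\co\cup\Theta_\cc\cup\Theta_\ccc)
= P(\e,\Theta_\co) + P(\e,\Theta_{\cc a}) + P(\e,\Theta_{\cc b}) + P(\e,\Theta_{\cc c}) + P(\e,\Theta_\ccc).
\end{equation*}

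Second, I would bound the two ``worst'' terms $P(\e,\Theta_{\cc c})$ and $P(\e,\Theta_\ccc)$ by simply dropping the $\e$ factor, $P(\e,\Theta_{\cc c})\le P(\Theta_{\cc c})$ and $P(\e,\Theta_\ccc)\le P(\Theta_\ccc)$. This matches the form of the claimed RHS and is motivated by the intuition that in those cases the reliability value $\mathcal L^-$ attached to the forwarded packet offers little protection, so there is no leverage in carrying $\e$ through the bound. For the first three terms the $\e$ factor is retained, reflecting the fact that under $\Theta_\co$, $\Theta_{\cc a}$, $\Theta_{\cc b}$ the forwarded sign information plus a correctly-designed reliability $\mathcal L$ can still combine constructively with $\vy_{12}$, so there is something to gain by keeping $\e$.

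Third, I would replace each $\Theta$-event by its BDD counterpart $\grave{\Theta}$. The key monotonicity fact is that any hard decoder performing at least as well as BDD satisfies $\e_i\subset \grave{\e}_i$ on every link $i$; combined with the independence of the noise vectors on the four point-to-point links, this allows factorwise bounding of the joint probabilities by their BDD analogues, yielding $P(\Theta_\ccc)\le P(\grave\Theta_\ccc)$ and analogous inequalities for the other subcases, as well as $P(\e,\Theta_\co)\le P(\e,\grave\Theta_\co)$ and similarly for $\Theta_{\cc a}$, $\Theta_{\cc b}$. Summing the five inequalities gives the stated bound.

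The main obstacle is executing step three cleanly for subcases whose defining events contain complementary link-failure factors (e.g., $\e_{1r}^c,\e_{2r}^c$ inside $\Theta_\co$, and $\e_{r2}^c$ inside $\Theta_{\cc c}$), because upon complementation the inclusion $\e_i\subset\grave{\e}_i$ reverses into $\grave{\e}_i^c\subset \e_i^c$. I expect to handle these in two ways, in the same spirit as step (a) of Theorem~\ref{thm:BLER_SDF_upb}: (i) first drop the offending complement factor via $P(A\cap B)\le P(A)$ and then apply monotonicity on the remaining uncomplemented factors; or (ii) factorize along the independent link noises and bound each link factor individually using the known upper and lower bounds $\stackrel{_\frown}{P}_i$ and $1-\stackrel{_\smile}{P}'_i$ from \eqref{eq:BLER-upper}--\eqref{eq:BLER-lower}. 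Either route reduces the remaining work to arithmetic verification, leaving the structural argument above as the heart of the proof.
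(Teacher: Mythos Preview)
Your Steps 1 and 2 are sound, but Step 3 --- the termwise replacement of each $\Theta$-subcase by its BDD counterpart $\grave{\Theta}$ --- does not go through, and neither of your proposed fixes lands on the required right-hand side. The point is that the individual inclusions $\Theta_\co\subset\grave{\Theta}_\co$, $\Theta_{\cc a}\subset\grave{\Theta}_{\cc a}$, etc., are \emph{false} in general: a sample point with $\omega\in\e^c_{1r}\cap\grave{\e}_{1r}$ (actual decoder correct on link $1r$, BDD wrong) lies in $\Theta_\co$ but in $\grave{\Theta}_{\cc a}$, not in $\grave{\Theta}_\co$. Consequently $P(\e,\Theta_\co)\le P(\e,\grave{\Theta}_\co)$ can fail. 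Your fix (i), dropping the complement factor, overshoots (you end up above $\grave{\Theta}_\co$, not at it); your fix (ii), factorwise bounding via $\stackrel{_\frown}{P}_i$ and $1-\stackrel{_\smile}{P}'_i$, likewise yields a bound strictly larger than $P(\grave{\Theta}_{\cc c})$, since $1-\stackrel{_\smile}{P}'_{r2}\ge 1-\stackrel{_\frown}{P}_{r2}$. Moreover, for the three terms that still carry $\e$, factorization is blocked because $\e$ is correlated with the link events.

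The paper sidesteps all of this by reversing your order of operations: it first passes to $\grave{\Theta}$ at the level of the \emph{union}, and only then decomposes and drops $\e$. The union simplifies via De Morgan to
\[
\Theta_\co\cup\Theta_\cc\cup\Theta_\ccc
=\e_{12}\cap(\e_{1r}\cup\e_{2r}\cup\e_{r2}),
\]
an expression with no complemented link factors; hence the monotonicity $\e_i\subset\grave{\e}_i$ applies cleanly and gives $\Theta_\co\cup\Theta_\cc\cup\Theta_\ccc\subset\grave{\Theta}_\co\cup\grave{\Theta}_\cc\cup\grave{\Theta}_\ccc$. Intersecting with the fixed event $\e$ and then splitting the $\grave{\Theta}$-union into its five disjoint pieces (and dropping $\e$ on the last two) finishes the proof. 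In short: the complements that obstruct your termwise Step 3 cancel in the union, and that cancellation is the missing idea.
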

\begin{IEEEproof}
Recall that no hard decoder $\D$ performs worse than the BDD, so it is clear that $\grave{\e}^c_i \subseteq {\e}^c_i \mbox{ and } {\e}_{i} \subseteq \grave{\e}_{i}$, for all $i$. Therefore,
${\Theta}_{\co} \cup {\Theta}_{\cc} \cup {\Theta}_{\ccc} = ({\e}^c_{1r} \cap {\e}^c_{2r} \cap {\e}^c_{r2})^c \cap {\e}_{12} \subseteq
 (\grave{\e}^c_{1r} \cap \grave{\e}^c_{2r} \cap \grave{\e}^c_{r2})^c \cap \grave{\e}_{12}  =\grave{\Theta}_{\co} \cup \grave{\Theta}_{\cc} \cup \grave{\Theta}_{\ccc} $.
Because $\Theta_\co\cup \Theta_\cc\cup \Theta_\ccc\ \subseteq \grave{\Theta}_{\co}\cup \grave{\Theta}_{\cc}\cup \grave{\Theta}_{\ccc}$, we get
$P(\e, \Theta_\co\cup \Theta_\cc\cup \Theta_\ccc )  \leq  P(\e,\grave{\Theta}_{\co}\cup \grave{\Theta}_{\cc}\cup \grave{\Theta}_{\ccc} )
\leq  P(\e,\grave{\Theta}_{\co}) + P(\e,\grave{\Theta}_{\cc a}) + P(\e,\grave{\Theta}_{\cc b}) + P(\grave{\Theta}_{\cc c} ) + P(\grave{\Theta}_{\ccc} )$,
which proves the lemma.
\end{IEEEproof}
\noindent

In the following, we analyze
the terms on the right-hand side of \eqref{eq:e.event.bounds} except for the first one, which is given by \eqref{eq:bler-caseI}.
%
%
\subsubsection{$P(\e,\grave{\Theta}_{\cc a})$}
We obtain an upper bound of $P(\e,\grave{\Theta}_{\cc a})=
P(\grave{\Theta}_{\cc a})-P({\e}^c,\grave{\Theta}_{\cc a})$ by deriving a lower bound of $P({\e}^c,\grave{\Theta}_{\cc a})$ as follows.
\begin{thm} \label{thm:barP_caseII.1}
Given a hard decoder of SPF $W_k$, $P({\e}^c,\grave{\Theta}_{\cc a})$ can be lower bounded by
\begin{eqnarray}
&& \stackrel{_\smile}{P}({\e}^c,\grave{\Theta}_{\cc a}) \cr
& \triangleq &(1-\stackrel{_\frown}{P}_{2r})(1-\stackrel{_\frown}{P}_{r2})\sum_{k=2t+1}^{n-1}\beta_{k}(p_{1r}) \sum_{m=\max(0,t+1-(n-k))}^{t} \cr
&& \makebox[-1cm]{}\times \sum_{i=t+1}^{m+(n-k)}\binom{k}{m}\binom{n-k}{i-m}p_{\cc}^{n-k-(i-m)}p_{12}^{m}  \sum_{g=0}^{t-m}\sum_{j=0}^{g}\binom{k-m}{j}
\dot{p}_{\cc}^{j}\dot{q}_{\cc}^{k-m-j}  \binom{i-m}{g-j}\ddot{p}_{\cc}^{g-j}\ddot{q}_{\cc}^{i-m-(g-j)}, \label{eq:lb-CaseII.1}
\end{eqnarray}
where $\beta_{k}(p_{1r})$ is given by \eqref{eq:de-BLER-lower}, and
\begin{equation}\label{eq:noe_yrk_group}
\begin{aligned}
 p_{\cc}   =  1-p_{12}, \quad
 \dot{p}_{\cc}  =  Q(-\sqrt{2h^{2}E})-Q\left(\sqrt{2E}(\mathcal{L}/4hE-h)\right), \\
\dot{q}_{\cc}  =  Q\left(\sqrt{2E}(\mathcal{L}/4hE-h)\right), \quad
 \ddot{p}_{\cc}  =  Q\left(\sqrt{2E}(\mathcal{L}/4hE+h)\right), \\
\ddot{q}_{\cc}   =   Q(\sqrt{2h^{2}E})-Q\left(\sqrt{2E}(\mathcal{L}/4hE+h)\right).
\end{aligned}
\end{equation}
\end{thm}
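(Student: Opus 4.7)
My plan is to mirror the structure of the proof of Theorem~\ref{thm:barP_caseI}, substituting a deterministic relay-side LLR for the Gaussian one. I would define $\stackrel{_\smile}{P}({\e}^c,\grave{\Theta}_{\cc a}) \triangleq P(d(\vl_d)\le t,\grave{\Theta}_{\cc a})$, which lower-bounds $P({\e}^c,\grave{\Theta}_{\cc a})$ because any hard decoder at least as strong as BDD recovers $\vx_1$ from $\vl_d$ whenever $d(\vl_d)\le t$. Within Case II.a, the relay transmits $\vx_r=\hat{\vx}_{1r}\circ\hat{\vx}_{2r}$, and because $\grave{\e}^c_{2r}$ gives $\hat{\vx}_{2r}=\vx_2$ and $\grave{\e}^c_{r2}$ guarantees that $S2$ recovers $\vx_r$ exactly, the LLR from the relay collapses to $\vl_{r2}=(\vx_r\circ\vx_2)\mathcal{L}=\hat{\vx}_{1r}\,\mathcal{L}$. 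Conditioning on $\vx_1=\vx_2=\mathbf{1}$ then gives the per-coordinate expression $\ell_{d,\tau}=4h\sqrt{E}\,y_{12,\tau}+\hat{x}_{1r,\tau}\,\mathcal{L}$, so the remaining randomness is carried by $\vy_{12}$ together with the combinatorial structure of $\hat{\vx}_{1r}$.

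Next, I would factor the joint probability across the four conditionally independent link components. The events $\grave{\e}^c_{2r}$ and $\grave{\e}^c_{r2}$ depend only on $\vn_{2r}$ and $\vn_{r2}$, so they contribute independent factors $(1-\stackrel{_\frown}{P}_{2r})$ and $(1-\stackrel{_\frown}{P}_{r2})$; crucially the $r2$ factor is the same regardless of which $\vx_r$ is broadcast, because BDD correctness is a property of the noise alone. For the $1r$ link, I would sum over the weight $k\ge 2t+1$ of the wrong codeword that $\mathfrak{D}$ actually outputs at the relay; by Lemma~\ref{lemma:lowerbd.of.UBA} this contributes $\beta_k(p_{1r})$, and the event ``$\hat{\vx}_{1r}$ is a weight-$k$ wrong codeword'' is automatically contained in $\grave{\e}_{1r}$ since any decoder at least as good as BDD decodes correctly when $d(\vy_{1r})\le t$.

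Given a fixed $\hat{\vx}_{1r}$ whose $-1$-support is a set $K$ of size $k$, and writing $I=\{\tau:y_{12,\tau}<0\}$ with $|I|=i\ge t+1$ and $|K\cap I|=m$, the sign of $\ell_{d,\tau}$ is forced positive off $K\cup I$, forced negative on $K\cap I$, and governed by a threshold crossing of $y_{12,\tau}$ at $\mathcal{L}/(4h\sqrt{E})$ on $K\setminus I$ and at $-\mathcal{L}/(4h\sqrt{E})$ on $I\setminus K$. The constraint $d(\vl_d)\le t$ therefore forces $m\le t$ and at most $t-m$ additional sign flips, which I would split as $j$ flips from $K\setminus I$ and $g-j$ from $I\setminus K$, with $\binom{k}{m}\binom{n-k}{i-m}$ placements of $I$ for each $\hat{\vx}_{1r}$, producing the combinatorial skeleton of \eqref{eq:lb-CaseII.1}. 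The main bookkeeping obstacle is verifying that the per-position joint probabilities $p_{\cc},\dot{p}_{\cc},\dot{q}_{\cc},\ddot{p}_{\cc},\ddot{q}_{\cc}$ match \eqref{eq:noe_yrk_group}; this reduces to two shifted Gaussian-tail calculations with $y_{12,\tau}\sim\mathcal{N}(h\sqrt{E},1/2)$, namely $\dot{p}_{\cc}=\Pr(0<y_{12,\tau}<\mathcal{L}/(4h\sqrt{E}))$ and $\ddot{p}_{\cc}=\Pr(y_{12,\tau}<-\mathcal{L}/(4h\sqrt{E}))$, after which $\dot{q}_{\cc}=(1-p_{12})-\dot{p}_{\cc}$ and $\ddot{q}_{\cc}=p_{12}-\ddot{p}_{\cc}$ follow by conservation of the probability mass on the half-lines $\{y_{12,\tau}>0\}$ and $\{y_{12,\tau}<0\}$ respectively.
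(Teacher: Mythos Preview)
Your proposal is correct and follows essentially the same route as the paper: factor out the independent $(1-\stackrel{_\frown}{P}_{2r})(1-\stackrel{_\frown}{P}_{r2})$ contributions, observe that $\grave{\e}^c_{2r}\cap\grave{\e}^c_{r2}$ forces $\mathfrak{D}(\tilde{\vy}_{r2})=\hat{\vx}_{1r}$ so that $\vl_d=4h\sqrt{E}\vy_{12}+\hat{\vx}_{1r}\mathcal{L}$, lower-bound the $1r$-link error by $\sum_k\beta_k(p_{1r})$ via Lemma~\ref{lemma:lowerbd.of.UBA}, and then carry out the same $(k,i,m,g,j)$ combinatorial count as in Theorem~\ref{thm:barP_caseI} with the threshold-crossing probabilities you identify. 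The only cosmetic slip is that your initial identification $\stackrel{_\smile}{P}({\e}^c,\grave{\Theta}_{\cc a})\triangleq P(d(\vl_d)\le t,\grave{\Theta}_{\cc a})$ is one inequality short of the theorem's expression, since replacing the exact weight-$k$ decoding probabilities by $\beta_k(p_{1r})$ introduces a further lower bound; the paper handles this by writing the argument as a chain of inequalities rather than a single definition.
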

\begin{proof}
Using the inequality  $\stackrel{_\frown}{P}_{1r}\geq \stackrel{_\smile}{P}_{1r}=\sum_{k=2t+1}^{n}\beta_{k}(p_{1r}) $ of Lemma \ref{lemma:lowerbd.of.UBA}, we have
\begin{equation}\label{eq:ineq-CaseII.1}
P({\e}^c,\grave{\Theta}_{\cc a})  \geq   (1-\stackrel{_\frown}{P}_{2r})(1-\stackrel{_\frown}{P}_{r2})\sum_{k=2t+1}^{n-1}\beta_{k}(p_{1r})
 P({\e}^c,d(\vy_{12})\geq t+1 \mid d(\hat{\vx}_{1r})=k,\grave{\e}^c_{2r},\grave{\e}^c_{r2}).
\end{equation}
Note that the term $k=n$ is omitted at \eqref{eq:ineq-CaseII.1} because $P({\e}^c,\grave{\e}_{12}\mid d(\hat{\vx}_{1r})=n,\grave{\e}^c_{2r},\grave{\e}^c_{r2})=0$.
The detailed derivation of \eqref{eq:lb-CaseII.1} from \eqref{eq:ineq-CaseII.1} which proves the theorem is given in Appendix~\ref{app:proof:thm:barP_caseII.1}.
\end{proof}
Using $P(\grave{\Theta}_{\cc a})=\stackrel{_\frown}{P}_{12}\stackrel{_\frown}{P}_{1r}(1-\stackrel{_\frown}{P}_{2r})(1-\stackrel{_\frown}{P}_{r2})$ and Theorem~\ref{thm:barP_caseII.1}, we conclude that $P(\e,\grave{\Theta}_{\cc a})$ can be upper bounded as follows
\begin{equation}\label{eq:PB-caseII.1}
\stackrel{_\frown}{P}(\e,\grave{\Theta}_{\cc a}) \triangleq \stackrel{_\frown}{P}_{12}\stackrel{_\frown}{P}_{1r}(1-\stackrel{_\frown}{P}_{2r})(1-\stackrel{_\frown}{P}_{r2})-\stackrel{_\smile}{P}({\e}^c,\grave{\Theta}_{\cc a}).
\end{equation}

Unlike $P(\grave{\e}^c,\grave{\Theta}_{\co})$ in \eqref{eq:lb-noe-CaseI}, $\stackrel{_\smile}{P}({\e}^c,\grave{\Theta}_{\cc a})$ in Theorem~\ref{thm:barP_caseII.1} is related with the SPF $W_k$.

\subsubsection{$P(\e,\grave{\Theta}_{\cc b})$ }
Similarly, we upper bound
$
P(\e,\grave{\Theta}_{\cc b}) = P(\grave{\Theta}_{\cc b}) - P({\e}^c,\grave{\Theta}_{\cc b}) $ by finding a lower bound of $P({\e}^c,\grave{\Theta}_{\cc b})$ as shown below.
\begin{cor}
Given a hard decoder with SPF $W_k$, $P({\e}^c,\grave{\Theta}_{\cc b})$ can be lower bounded by \eqref{eq:lb-caseII.2}.
\begin{eqnarray}\label{eq:lb-caseII.2}
\stackrel{_\smile}{P}({\e}^c,\grave{\Theta}_{\cc b})
& \triangleq &(1-\stackrel{_\frown}{P}_{1r})(1-\stackrel{_\frown}{P}_{r2})\sum_{k=2t+1}^{n-1}\beta_{k}(p_{2r}) \sum_{m=\max(0,t+1-(n-k))}^{t}\sum_{i=t+1}^{m+(n-k)}\binom{k}{m}\binom{n-k}{i-m} \nonumber \\
&& \makebox[0.5cm]{}\times p_{\cc}^{n-k-(i-m)}p_{12}^{m}\sum_{g=0}^{t-m}\sum_{j=0}^{g}\binom{k-m}{j}
\dot{p}_{\cc}^{j}\dot{q}_{\cc}^{k-m-j}\binom{i-m}{g-j}\ddot{p}_{\cc}^{g-j}\ddot{q}_{\cc}^{i-m-(g-j)}.
\end{eqnarray}
\end{cor}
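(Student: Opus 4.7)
The plan is to exploit the complete symmetry between the two subcases $\grave{\Theta}_{\cc a}$ and $\grave{\Theta}_{\cc b}$: the latter is obtained from the former by swapping the roles of the $S1$-$R$ and $S2$-$R$ links. Because the analysis in Theorem~\ref{thm:barP_caseII.1} depends on those two relay links only through the weight spectrum of the relay's decoded output and the BDD probabilities $\stackrel{_\frown}{P}_{1r}, \stackrel{_\frown}{P}_{2r}$, the whole derivation should port over almost verbatim, with $p_{1r}$ and $\hat{\vx}_{1r}$ replaced by $p_{2r}$ and $\hat{\vx}_{2r}$, and $(1-\stackrel{_\frown}{P}_{2r})$ replaced by $(1-\stackrel{_\frown}{P}_{1r})$ in the leading factor.

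Concretely, I would first decompose $P({\e}^c,\grave{\Theta}_{\cc b})$ by conditioning on $d(\hat{\vx}_{2r})=k$, which in conjunction with Lemma~\ref{lemma:lowerbd.of.UBA} applied to the $S2$-$R$ link yields the factor $(1-\stackrel{_\frown}{P}_{1r})(1-\stackrel{_\frown}{P}_{r2})\sum_{k=2t+1}^{n-1}\beta_{k}(p_{2r})$, mirroring the leading factor in \eqref{eq:lb-CaseII.1}. As before, the $k=n$ term drops because the event $d(\vy_{12})\geq t+1$ together with $d(\vl_d)\leq t$ is impossible when $\hat{\vx}_{2r}$ coincides with the all-minus-one word.

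Next, I need to verify that in $\grave{\Theta}_{\cc b}$ one has $\D(\tilde{\vy}_{r2})=\hat{\vx}_{2r}$, so that the combined LLR takes the form $\vl_d=4h\sqrt{E}\vy_{12}+\hat{\vx}_{2r}\mathcal{L}$ and the subsequent bit-counting argument becomes structurally identical to that of Theorem~\ref{thm:barP_caseII.1}. This follows because $\grave{\e}^c_{r2}$ gives $\D(\tilde{\vy}_{r2})=\vx_r\circ \vx_2 = \hat{\vx}_{1r}\circ\hat{\vx}_{2r}\circ\vx_2$, and combining this with $\grave{\e}^c_{1r}$ (hence $\hat{\vx}_{1r}=\vx_1=\mathbf{1}$) and the standing assumption $\vx_2=\mathbf{1}$ collapses the product to $\hat{\vx}_{2r}$. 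Crucially, the per-bit probabilities $p_{\cc},q_{\cc},\dot{p}_{\cc},\dot{q}_{\cc},\ddot{p}_{\cc},\ddot{q}_{\cc}$ in \eqref{eq:noe_yrk_group} depend only on $\vy_{12}$ and the reliability value $\mathcal{L}$, not on which relay link produced the decoding error; they therefore carry over unchanged.

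Finally, summing the correct-decoding probability $p_{\cc}^{n-k-(i-m)}p_{12}^{m}\dot{p}_{\cc}^{j}\dot{q}_{\cc}^{k-m-j}\ddot{p}_{\cc}^{g-j}\ddot{q}_{\cc}^{i-m-(g-j)}$ over the $5$-tuple $(k,i,m,g,j)$ exactly as in \eqref{eq:noe_yrk} yields the stated lower bound. I do not expect any genuine obstacle; the only subtlety worth spelling out is the identification $\D(\tilde{\vy}_{r2})=\hat{\vx}_{2r}$ under the defining constraints of $\grave{\Theta}_{\cc b}$, after which the corollary is obtained as a straightforward relabeling of Theorem~\ref{thm:barP_caseII.1}, and the combinatorial bookkeeping need not be reproduced in full.
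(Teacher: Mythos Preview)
Your proposal is correct and follows exactly the paper's approach: the paper's own proof is the one-line remark that the corollary follows from Theorem~\ref{thm:barP_caseII.1} by exchanging the roles of $\e_{1r}$ and $\e_{2r}$, and your write-up simply spells out this symmetry in detail. The identification $\D(\tilde{\vy}_{r2})=\hat{\vx}_{2r}$ you highlight is precisely the mirror of the identification $\D(\tilde{\vy}_{r2})=\hat{\vx}_{1r}$ used in the proof of Theorem~\ref{thm:barP_caseII.1}, and the rest is, as you say, a relabeling.
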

\begin{proof}
This corollary is proven through a direct use of Theorem \ref{thm:barP_caseII.1} by exchanging the roles of $\e_{1r}$ and $\e_{2r}$.
\end{proof}

Thus, by the corollary, $P(\e,\grave{\Theta}_{\cc b})$ is upper bounded by
\begin{equation}\label{eq:PB-caseII.2}
\stackrel{_\frown}{P}(\e,\grave{\Theta}_{\cc b}) \triangleq \stackrel{_\frown}{P}_{12}(1-\stackrel{_\frown}{P}_{1r})\stackrel{_\frown}{P}_{2r}(1-\stackrel{_\frown}{P}_{r2})-\stackrel{_\smile}{P}({\e}^c,\grave{\Theta}_{\cc b}).
\end{equation}

\subsubsection{Upper bound for $P_\text{TW-1bSF}^{(\vx_1)}$}
Substituting \eqref{eq:lb-noe-CaseI} into \eqref{eq:bler-caseI}, \eqref{eq:lb-CaseII.1} into \eqref{eq:PB-caseII.1}, and \eqref{eq:lb-caseII.2} into \eqref{eq:PB-caseII.2}, gives expressions for $P(\grave{\e},\grave{\Theta}_{\co})$, $\stackrel{_\frown}{P}(\e,\grave{\Theta}_{\cc a})$ and $\stackrel{_\frown}{P}(\e,\grave{\Theta}_{\cc b})$, respectively. Then, by substituting these upper bounds, $P(\grave{\Theta}_{\cc c})= \stackrel{_\frown}{P}_{12}\stackrel{_\frown}{P}_{1r}\stackrel{_\frown}{P}_{2r}(1-\stackrel{_\frown}{P}_{r2})$ and $P(\grave{\Theta}_{\ccc}) = \stackrel{_\frown}{P}_{12}(1-(1-\stackrel{_\frown}{P}_{1r})(1-\stackrel{_\frown}{P}_{2r}))\stackrel{_\frown}{P}_{r2}$ into
\eqref{eq:e.event.bounds}, we obtain an upper bound for $P(\e,\Theta_\co\cup\Theta_\cc\cup \Theta_\ccc)$, i.e., $P_\text{TW-1bSF}^{(\vx_1)}$:
%
\begin{equation}\label{eq:up-TW-1bSF}
\stackrel{_\frown}{P}^{(\vx_1)}_{\text{TW-1bSF}}  \triangleq  P(\grave{\e},\grave{\Theta}_{\co}) + \stackrel{_\frown}{P}(\e,\grave{\Theta}_{\cc a})
  + \stackrel{_\frown}{P}(\e,\grave{\Theta}_{\cc b}) + P(\grave{\Theta}_{\cc c}) + P(\grave{\Theta}_{\ccc}).
\end{equation}

\begin{rem}[\emph{Performance of Perfect Codes}]
For perfect codes, the sphere decoder is exactly the optimal hard decoder. It has $W_{2t+1}=A_{2t+1}\binom{2t+1}{t}=\binom{n}{t+1}$ and $W_k=0$ for all $k>2t+1$. Then the lower bound given by \eqref{eq:BLER-lower} and the upper bound given by \eqref{eq:BLER-upper} are equal, and also equal to the lower bound \eqref{eq:de-BLER-lower} of this paper. Thus, the upper bound of TW-SDF given by \eqref{eq:bler-SDF} is exact. Furthermore, the upper bound derived for TW-1bSF is asymptotically tight because the results for $P(\e, \grave{\Theta}_{\co})$, $P(\e, \grave{\Theta}_{\cc a})$ and $P(\e, \grave{\Theta}_{\cc b})$ are exact, whereas $P(\grave{\Theta}_{\cc c})$ and $P(\grave{\Theta}_{\ccc})$ are asymptotically negligible.
\end{rem}

\section{Asymptotic performance analysis of TW-SDF and TW-1bSF}
%
The analytical comparison of the BLER performance of the TW-1bSF and the TW-SDF protocols is prohibitive because of the fairly long BLER expressions. To gain insight, we derive and compare their asymptotic performance as $E\rightarrow \infty$ and $E_r\rightarrow \infty$.
In the following analysis, $f_{A}(E,E_r)$ and $f_{B}(E,E_r)$ are said to be asymptotically equal with respect to $E$ and $E_r$, denoted as $f_{A}(E,E_r)\sim f_{B}(E,E_r)$, if $\log_{E\rightarrow\infty,E_r\rightarrow\infty}f_{A}(E,E_r)/f_{B}(E,E_r)=1$.
We first present a lemma that will be used in the asymptotic analysis afterwards.
\begin{lemma}\label{lemma:asym-ratio}
Let $z_1\sim\mathcal{N}(\mu_1\sqrt{E},1)$ and $z_2\sim\mathcal{N}(\mu_2\sqrt{E},1)$
be two independent Gaussian random variables. If $\Lambda_1=\Pr(z_1+z_2>0,z_1>0,z_2<0)$
and $\Lambda_2=\Pr(z_1+z_2<0,z_1>0,z_2<0)$, then $\Lambda_1/\Lambda_2\rightarrow \infty$ as $E \rightarrow \infty$.
\end{lemma}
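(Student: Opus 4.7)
The plan is to exploit the scale separation induced by the joint constraint $\{z_1>0,\,z_2<0\}$: when the means $\mu_1\sqrt{E}$ and $\mu_2\sqrt{E}$ are positive (the regime in which the lemma is invoked in the paper's asymptotic analysis), this constraint leaves $z_1$ concentrated near its mean $\mu_1\sqrt{E}$ while forcing $z_2$ into its Gaussian tail near $0$, where $|z_2|$ is of order $1/(\mu_2\sqrt{E})$. Hence $z_1+z_2$ is overwhelmingly likely to be positive and of order $\mu_1\sqrt{E}$, so the additional constraint $\{z_1+z_2<0\}$ that distinguishes $\Lambda_2$ from $\Lambda_1$ is exponentially rare compared with the common event, giving $\Lambda_1/\Lambda_2\to\infty$.

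Concretely, I would use the independence of $z_1,z_2$ to write $\Lambda_2=\int_{-\infty}^{0}f_{z_2}(z_2)\Pr(0<z_1<-z_2)\,dz_2$ and upper-bound the inner probability by the interval length $-z_2$ times the maximum of the density $\phi(z_1-\mu_1\sqrt{E})$ on $[0,-z_2]$. Splitting the outer integral at $|z_2|=\mu_1\sqrt{E}/2$, the small-$|z_2|$ piece contributes at most $\phi(\mu_1\sqrt{E}/2)\int_{-\infty}^{0}|z_2|f_{z_2}(z_2)\,dz_2$, and the complementary range is dominated by the Gaussian tail $\Pr(z_2<-\mu_1\sqrt{E}/2)=Q((\mu_1/2+\mu_2)\sqrt{E})$, which decays much faster. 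A standard Mill's-ratio estimate gives $\int_{-\infty}^{0}|z_2|f_{z_2}(z_2)\,dz_2\sim Q(\mu_2\sqrt{E})/(\mu_2\sqrt{E})$, so $\Lambda_2\lesssim \exp(-\mu_1^2 E/8)\,Q(\mu_2\sqrt{E})/(\mu_2\sqrt{E})$.

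For a matching lower bound on $\Lambda_1$ I would restrict the region to the sub-event $\{z_1>1,\,-1<z_2<0\}$, which lies inside the region defining $\Lambda_1$ since $z_1+z_2>0$ there. By independence, $\Lambda_1\ge \Pr(z_1>1)\Pr(-1<z_2<0)\gtrsim Q(\mu_2\sqrt{E})$ because $\Pr(z_1>1)\to 1$ when $\mu_1>0$ and the second factor is comparable to the tail $Q(\mu_2\sqrt{E})$. Forming the ratio then yields $\Lambda_1/\Lambda_2\gtrsim \mu_2\sqrt{E}\exp(\mu_1^2 E/8)\to\infty$.

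The main obstacle is making the two-scale argument quantitatively airtight: $z_2$ is constrained by its tail to be of order $O(1/(\mu_2\sqrt{E}))$, while $z_1$ sits at distance roughly $\mu_1\sqrt{E}$ from the allowed interval $[0,-z_2]$, and the proof must extract the resulting $\exp(-\Theta(\mu_1^2 E))$ factor uniformly over both the small-$|z_2|$ bulk and the large-$|z_2|$ tail of the outer integral. A minor point is that the statement of the lemma does not spell out sign conditions on $\mu_1,\mu_2$; the intended regime $\mu_1,\mu_2>0$ is clear from the upstream signal decompositions in which the lemma is to be applied, and the above bound is valid throughout that regime.
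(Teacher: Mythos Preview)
Your argument is correct and follows the same overall plan as the paper: lower bound $\Lambda_1$, upper bound $\Lambda_2$, each by conditioning on $z_2$ and splitting the range of integration so that the inner conditional probability can be bounded uniformly on each piece. The technical execution differs, however. The paper splits at $z_2=-(\mu_1-\epsilon)\sqrt{E}$ for an arbitrary $0<\epsilon<\min(\mu_1,\mu_2)$ and bounds the inner probability $\Pr(z_1+z_2\gtrless 0,\,z_1>0\mid z_2)$ directly by its extremal $Q$-values on each subinterval; the resulting lower and upper bounds are products of $Q$-functions, and the conclusion follows from the elementary ratios $Q(\mu_2\sqrt{2E})/Q((\mu_1+\mu_2-\epsilon)\sqrt{2E})\to\infty$ and $Q(\epsilon\sqrt{2E})\to 0$, with no Mill's ratio or density estimates needed. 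Your route instead bounds $\Pr(0<z_1<-z_2)$ by interval length times maximal density, then uses a Mill's-ratio computation for $\int_{-\infty}^0 |z_2|f_{z_2}(z_2)\,dz_2$, and for $\Lambda_1$ uses the fixed box $\{z_1>1,\,-1<z_2<0\}$ rather than a box scaling with $E$. Both are valid; the paper's version is shorter and stays entirely within $Q$-function arithmetic, while yours is slightly more laborious but delivers an explicit exponential rate $\Lambda_1/\Lambda_2\gtrsim \mu_2\sqrt{E}\,e^{\mu_1^2 E/8}$. Your remark that the lemma tacitly assumes $\mu_1,\mu_2>0$ is well taken; the paper's proof also uses this when choosing $0<\epsilon<\min(\mu_1,\mu_2)$.
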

\begin{IEEEproof}
A detailed proof is given in Appendix~\ref{app:proof:lem:asym-ratio}.
\end{IEEEproof}

\subsection{Asymptotic performance of TW-SDF}
When $E\rightarrow \infty$ and $E_r\rightarrow \infty$, i.e., $p_i\rightarrow 0$, the upper bound \eqref{eq:BLER-upper}
converges to $\stackrel{_\frown}{P}_i\sim \tbinom{n}{t+1}p_i^{t+1}$.
We start by analyzing $P(\grave{\e},\grave{\Theta}_{\co})$ in \eqref{eq:bler-caseI}.

Case I): Since $\stackrel{_\frown}{P}_{12}(1-\stackrel{_\frown}{P}_{1r})(1-\stackrel{_\frown}{P}_{2r}) \stackrel{_\frown}{P}_{r2}  \sim \stackrel{_\frown}{P}_{12}\stackrel{_\frown}{P}_{r2} \\ \sim
\binom{n}{t+1}p_{12}^{t+1}\binom{n}{t+1}p_{r2}^{t+1}$, asymptotically
$P(\grave{\e},\grave{\Theta}_{\co})\sim \binom{n}{t+1}p_{12}^{t+1}\binom{n}{t+1}p_{r2}^{t+1}-P(\grave{\e}^c,\grave{\Theta}_{\co})$.
Regarding $P(\grave{\e}^c,\grave{\Theta}_{\co})$ given by \eqref{eq:lb-noe-CaseI}, it is clear from Lemma~\ref{lemma:asym-ratio} that $\dot{q}_{\co}/\dot{p}_{\co}\rightarrow \infty$ and  $\ddot{q}_{\co}/\ddot{p}_{\co}\rightarrow \infty$, and hence $\dot{q}_{\co}\sim p_{r2}(1-p_{12})\sim p_{r2}$ and $\ddot{q}_{\co}\sim p_{12}(1-p_{r2}) \sim p_{12}$. Thus, $P(\grave{\e}^c,\grave{\Theta}_{\co})$ is dominated by the terms in which $\dot{p}_{\co}^j=\ddot{p}_{\co}^{g-j}=1$, i.e., $j=g=0$, $i = t+1 $ and $k  = t+1$. Therefore
\begin{IEEEeqnarray}{rCl}
 P(\grave{\e}^c,\grave{\Theta}_{\co})
& \sim &  \binom{n}{t+1}
\sum_{m=0}^{t}\binom{t+1}{m}\binom{n-(t+1)}{t+1-m}
p_{r2}^{m}p_{12}^{m}\dot{q}_{\co}^{t+1-m}\ddot{q}_{\co}^{t+1-m} \nonumber\\
 & \sim & \binom{n}{t+1}p_{r2}^{t+1}p_{12}^{t+1}
\sum_{m=0}^{t}\binom{t+1}{m}\binom{n-(t+1)}{t+1-m}.
\end{IEEEeqnarray}
Thus,
\begin{IEEEeqnarray}{rCl}\label{eq:asym-caseI}
P(\grave{\e},\grave{\Theta}_{\co}) & \sim & \binom{n}{t+1}^2 p_{12}^{t+1}p_{r2}{}^{t+1}
- \binom{n}{t+1} p_{r2}^{t+1}p_{12}^{t+1}  \sum_{m=0}^{t}\binom{t+1}{m}\binom{n-(t+1)}{t+1-m} \nonumber\\
&\sim &  p_{12}^{t+1}p_{r2}^{t+1}\binom{n}{t+1}.
\end{IEEEeqnarray}
To derive \eqref{eq:asym-caseI}, we used the fact that $
\tbinom{n}{t+1}-\sum_{r=0}^{t}\tbinom{
t+1}{r}\tbinom{
n-(t+1)}{
t+1-r}=\tbinom{
t+1}{t+1}=1
$.

Finally, by substituting this asymptotic form and
$P(\grave{\Theta}_{\cc}\cup \grave{\Theta}_{\ccc})
\sim \stackrel{_\frown}{P}_{12}(\stackrel{_\frown}{P}_{1r}+\stackrel{_\frown}{P}_{2r})
$ into \eqref{eq:bler-SDF},
we obtain
\begin{equation}
\stackrel{_\frown}{P}^{(\vx_1)}_{\text{TW-SDF}}
  \sim   \stackrel{_\frown}{P}^{(\vx_1)}_{\text{TW-SDF,asym}}
 \triangleq  \binom{n}{t+1} p_{12}^{t+1}p_{r2}{}^{t+1}
+\binom{n}{t+1}^2[p_{1r}^{t+1}+p_{2r}^{t+1}]p_{12}^{t+1}.
\label{eq:asym-SDF}
\end{equation}

\subsection{Asymptotic performance of TW-1bSF}
Among the terms of $\stackrel{_\frown}{P}^{(\vx_1)}_{\text{TW-1bSF}}$ in \eqref{eq:up-TW-1bSF}, we have derived an asymptotic expression for $P(\grave{\e},\grave{\Theta}_{\co})$ in \eqref{eq:asym-caseI}.
In the following, we derive the remaining terms for Case II and Case III.

Case II): In $\stackrel{_\frown}{P}(\e,\grave{\Theta}_{\cc a})$ given by \eqref{eq:lb-CaseII.1}, $\dot{q}_{\cc}$ is the probability that an erroneous bit $i$ from the relay with $\hat{x}_{1r,i}=-1$ can be corrected by $y_{12,i}>0$, as shown in Fig.~\ref{fig:SF-caseII.1}. To optimize the performance of TW-1bSF, $\dot{q}_{\cc}$ should be maximized. Therefore it is desirable that $\dot{q}_{\cc}>\dot{p}_{\cc}$, and we want $\tfrac{\mathcal{L}}{4hE}\sim \tfrac{\min(h_{1}^{2},h_{2}^{2})}{4h}<h$ to hold. Here, to derive the asymptotic form of $\mathcal{L}$ we have used the approximation $Q(x)\sim \tfrac{1}{2}\exp(-\tfrac{1}{2}x^2)$ and $1-Q(x)\sim 1$.

If the desired condition is satisfied, then $\dot{q}_{\cc}/ \dot{p}_{\cc}\rightarrow \infty$ as $E\rightarrow\infty$, and therefore $\dot{q}_{\cc}\sim1$. Note that $\tfrac{\mathcal{L}}{4hE}\sim \tfrac{\min(h_{1}^{2},h_{2}^{2})}{4h}<h$ holds when $h=h_{1}=h_{2}=1$, for instance.
Furthermore, it is clear that $\ddot{q}_{\cc}/\ddot{p}_{\cc}\rightarrow \infty$, and so $\ddot{q}_{\cc}\sim Q(\sqrt{2h^{2}E})=p_{12}$.
Based on the asymptotic analysis for $\dot{q}_{\cc}$, $\dot{p}_{\cc}$, $\ddot{q}_{\cc}$ and $\ddot{p}_{\cc}$, we conclude that in $\stackrel{_\frown}{P}(\e,\grave{\Theta}_{\cc a})$ the dominant events happen when $i=t+1$,
$g=0$, and $j=0$, for all $k$ with $W_k\geq 0$, so
$P(d(\vl_d)\leq t,d(\vy_{12})\geq t+1 \mid d(\hat{\vx}_{1r})=k,\mathfrak{D}(\tilde{\vy}_{r2})=\hat{\vx}_{1r}) \sim
\sum_{m=0}^{t}\binom{k}{m}\binom{n-k}{t+1-m}p_{12}^{t+1}\dot{q}_{\cc}^{k-m}$ in \eqref{eq:noe_yrk}.
Therefore, by substituting the asymptotic form of \eqref{eq:noe_yrk} and $\beta_{k}(p_{1r}) \sim W_k p_{1r}^{t+1}$ into \eqref{eq:lb-CaseII.1}, and then into \eqref{eq:PB-caseII.1}, we get
\begin{IEEEeqnarray}{rCl}\label{eq:asym-caseII.1}
 && \stackrel{_\frown}{P}(\e,\grave{\Theta}_{\cc a}) \nonumber \\
 & \sim & \binom{n}{t+1}^2 p_{12}^{t+1}p_{1r}^{t+1}
 - \sum_{k=2t+1}^{n} W_k p_{1r}^{t+1}p_{12}^{t+1}\sum_{m=0}^{t}\binom{k}{m}\binom{n-k}{t+1-m}
 \dot{q}_{\cc}^{k-m} \nonumber \\
&\sim &p_{1r}^{t+1}p_{12}^{t+1} \left[\binom{n}{t+1}^2
- \sum_{k=2t+1}^{n} W_k  \left( \binom{n}{t+1}-\binom{k}{t+1} \right)
\right].
\end{IEEEeqnarray}
For the derivation of the last step, we have used the fact that $\sum_{m=0}^{t}\binom{k}{m}\binom{n-k}{t+1-m} = \binom{n}{t+1}-\binom{k}{t+1}$.

By replacing $p_{1r}$ with $p_{2r}$ in the asymptotic equation \eqref{eq:asym-caseII.1}, we have
\begin{equation}
\stackrel{_\frown}{P}(\e,\grave{\Theta}_{\cc b})
\sim  p_{2r}^{t+1}p_{12}^{t+1}
\left[\binom{n}{t+1}^2
- \sum_{k=2t+1}^{n} W_k  \left( \binom{n}{t+1}-\binom{k}{t+1} \right)
\right].
 \label{eq:asym-caseII.2}
\end{equation}

From $P(\grave{\Theta}_{\cc c})= \stackrel{_\frown}{P}_{12}\stackrel{_\frown}{P}_{1r}\stackrel{_\frown}{P}_{2r}(1-\stackrel{_\frown}{P}_{r2})$, it is clear that
$P(\grave{\Theta}_{\cc c})\sim \binom{n}{t+1}^{3} (p_{1r}p_{2r})^{t+1}p_{12}^{t+1}$.
By comparing $P(\grave{\Theta}_{\cc c})\sim O((p_{1r}p_{2r})^{t+1}p_{12}^{t+1})$ with $\stackrel{_\frown}{P}(\e,\grave{\Theta}_{\cc a})$ and $\stackrel{_\frown}{P}(\e,\grave{\Theta}_{\cc b})$, which are $O(p_{12}^{t+1}p_{1r}^{t+1})$ and $O(p_{12}^{t+1}p_{2r}^{t+1})$, we find that $P(\grave{\Theta}_{\cc c})$ is asymptotically negligible.

Case III): Using similar arguments as applied to derive the asymptotic form of $P(\grave{\Theta}_{\cc c})$, we also find that $P(\grave{\Theta}_{\ccc})$ is asymptotically negligible compared with $\stackrel{_\frown}{P}(\e,\grave{\Theta}_{\co})$, $\stackrel{_\frown}{P}(\e,\grave{\Theta}_{\cc a})$ and $\stackrel{_\frown}{P}(\e,\grave{\Theta}_{\cc b})$.

Finally, by substituting \eqref{eq:asym-caseI}, \eqref{eq:asym-caseII.1} and \eqref{eq:asym-caseII.2} into \eqref{eq:up-TW-1bSF}, we obtain the following asymptotic expression for $\stackrel{_\frown}{P}^{(\vx_1)}_{\text{TW-1bsf}}$
\begin{eqnarray}
&&\stackrel{_\frown}{P}^{(\vx_1)}_{\text{TW-1bSF,asym}} \cr
& \triangleq  & p_{12}^{t+1}p_{r2}^{t+1}\binom{
n}{t+1}+(p_{1r}^{t+1}+p_{2r}^{t+1})p_{12}^{t+1}
 \left[\binom{n}{t+1}^2
- \sum_{k=2t+1}^{n} W_k  \left( \binom{n}{t+1}-\binom{k}{t+1} \right)
\right].
\label{eq:asym-SF}
\end{eqnarray}


\subsection{Asymptotic Performance Comparison}
Comparing the asymptotic performance of TW-SDF and TW-1bSF given
by  \eqref{eq:asym-SDF} and \eqref{eq:asym-SF}, respectively, we note that they have the same first term, but the second term of  \eqref{eq:asym-SF}
is smaller than the second term of \eqref{eq:asym-SDF}. This observation means
that TW-1bSF has a performance gain over TW-SDF if the error events of Case II are dominant
over the error events of Case I, which happens if $p_{r2}\leq p_{1r}$ or $p_{r2}\leq  p_{2r}$.
Now, suppose that $E_{r}=E$ and $h=h_{1}=h_{2}=1$, which satisfy both conditions $p_{r2} \leq p_{2r}$ and $\dot{q}_{\cc}\sim1$. Then the asymptotic performance gain of the TW-1bSF protocol over the TW-SDF protocol is given by
\begin{equation}
\frac{\stackrel{_\frown}{P}^{(\vx_1)}_{\text{TW-SDF,asym}}}{\stackrel{_\frown}{P}^{(\vx_1)}_{\text{TW-1bSF,asym}}}
=\frac{\binom{n}{t+1}^2}{\binom{n}{t+1}^2
- \sum_{k=2t+1}^{n} W_k \left( \binom{n}{t+1}-\binom{k}{t+1} \right)
}.
\label{eq:asym-gain}
\end{equation}

If perfect codes like Hamming codes are applied, then $W_{2t+1} = \binom{n}{t+1}$ and $W_k = 0$ for $k=0$ and $k>2t+1$. Then the asymptotic gain \eqref{eq:asym-gain} can be further simplified to
\begin{equation}
\frac{\stackrel{_\frown}{P}^{(\vx_1)}_{\text{TW-SDF,pc,asym}}}{\stackrel{_\frown}{P}^{(\vx_1)}_{\text{TW-1bSF,pc,asym}}}= \binom{n}{t+1}\left/\binom{2t+1}{t+1}\right. .
\end{equation}
This asymptotic ratio for perfect codes shows that the performance
gain increases with the codeword length $n$.
%

\begin{rem}[\emph{Redesign of the Reliability Value}] \label{rem:rel-design} From the asymptotic analysis above, in order to maximize the performance gain of the TW-1bSF protocol over the TW-SDF protocol, two conditions need to be met: (1) $\mathcal{L}/4hE<h$ in order for $\dot{q}_{\cc}\sim1$; (2) $p_{r2}\leq p_{1r}$ or $p_{r2}\leq  p_{2r}$ so that the rate of error events in Case II dominates over the rate of error events in Case I. Condition (2) can be easily satisfied, since $p_{r2}=p_{2r}$ when $E_r = E$. Regarding Condition (1), the original choice for $\mathcal{L}\sim\min(h_{1}^{2}E,h_{2}^{2}E)$ provides no guarantee on whether the condition is met.
For instance, the condition is violated if $h=1$, $h_{1}=h_{2}=2$.
In order to satisfy the condition, the reliability value can be
chosen as \begin{equation}
\mathcal{L}^*=\min\left(\log\frac{1-p_{1r}}{p_{1r}},\log\frac{1-p_{2r}}{p_{2r}},\log\frac{1-p_{12}}{p_{12}}\right).\label{eq:rel-new}\end{equation}
The superiority of this design choice will be illustratively verified in
Section~VI through simulations. On the other hand, the design of $\mathcal{L}^-$ is not as critical as $\mathcal{L}$ since the choice of $\mathcal{L}^-$ does not influence the asymptotic performance.
\end{rem}

\begin{rem}[\emph{Decoding Energy Consumption Comparison}]
Let $\epsilon$ be the energy consumption that is needed to decode a codeword. We first consider the consumption at $S2$. Given that $\vy_{12}$ is in error, in the TW-SDF protocol, if  $R$ decodes the received signals correctly and forwards the network-coded word, $S2$ will consume $3\epsilon$ to decode $\vy_{12}$, $\tilde{\vy}_{r2}$ and $(4h\sqrt{E}\vy_{12}+4h_2\sqrt{E_r}\tilde{\vy}_{r2})$; otherwise, $R$ forwards nothing, and $S2$ merely decodes $\vy_{12}$ consuming $\epsilon$. Therefore, the
decoding energy consumed at $S2$ by the TW-SDF protocol is
\begin{align}
E_{\text{TW-SDF},S2}&= \epsilon P(\e^c_{12})+ 3\epsilon P(\e_{12},\e^c_{1r},\e^c_{2r})
+\epsilon (P(\e_{12})-P(\e_{12},\e^c_{1r},\e^c_{2r})) \cr
&\approx   \epsilon(1-\stackrel{_\frown}{P}_{12})+3\epsilon P(\Theta_\co) + \epsilon P(\Theta_{\cc}\cup\Theta_{\ccc}).
\end{align}
In the TW-1bSF protocol, if $\vy_{12}$ is in error, no matter whether $R$ can decode the received signals correctly, it will forward $\vx_r$, so $S2$ consumes $3\epsilon P(\Theta_{\cc}\cup\Theta_{\ccc})$.
The decoding energy consumed by the TW-1bSF protocol is given by
\begin{equation}
E_{\text{TW-1bSF},S2}\approx \epsilon(1-\stackrel{_\frown}{P}_{12})+3\epsilon P(\Theta_\co) + 3\epsilon P(\Theta_{\cc}\cup\Theta_{\ccc}).
\end{equation}

Due to the symmetry of the model and $\stackrel{_\frown}{P}_{12} =\stackrel{_\frown}{P}_{21}$, the decoding energy consumption at $S1$ is given by $E_{\text{TW-SDF},S1} = E_{\text{TW-SDF},S2}$, and $E_{\text{TW-1bSF},S1} = E_{\text{TW-1bSF},S2}$. As to the relay, no matter whether TW-SDF or TW-1bSF is applied, it decodes the messages from
$S1$ and $S2$, and thus consumes energy $E_{\text{TW-SDF},R} = E_{\text{TW-1bSF},R} = 2\epsilon$. Finally, regarding the whole system including
$S1$, $S2$ and $R$, the decoding energy consumption for TW-SDF and TW-1bSF is given, respectively, by
\begin{align}
E_{\text{TW-SDF}} & =  2\epsilon(2-\stackrel{_\frown}{P}_{12})+6\epsilon P(\Theta_\co) + 2\epsilon P(\Theta_{\cc}\cup\Theta_{\ccc}), \\
E_{\text{TW-1bSF}}& =  2\epsilon(2-\stackrel{_\frown}{P}_{12})+6\epsilon P(\Theta_\co) + 6\epsilon P(\Theta_{\cc}\cup\Theta_{\ccc}).
\end{align}

Since $P(\Theta_{\co})\sim\stackrel{_\frown}{P}_{12}\stackrel{_\frown}{P}_{r2}$ and $P(\Theta_{\cc}\cup\Theta_{\ccc})\sim \stackrel{_\frown}{P}_{12}(\stackrel{_\frown}{P}_{1r}+\stackrel{_\frown}{P}_{2r})$, we have $E_{\text{TW-SDF}}\sim E_{\text{TW-1bSF}}\sim 2\epsilon(2-\stackrel{_\frown}{P}_{12})$, which means that the TW-1bSF protocol consumes almost the same decoding energy as TW-SDF.
\end{rem}

\begin{rem}[\emph{Transmission Energy Consumption Comparison}]
In TW-1bSF, both sources transmit a message with power $E$, while the relay forwards a network-coded message with power $E_r$ during the third timeslot if any of the two direct links fails, i.e., $\e_{12}$ or $\e_{21}$. Therefore, to exchange a pair of messages, the overall transmission energy consumed by TW-1bSF is
\begin{equation}
T_{\text{TW-1bSF}} \approx 2E + E_r(\stackrel{_\frown}{P}_{12}+\stackrel{_\frown}{P}_{21}).
\end{equation}
In order for the relay at TW-SDF to forward, besides $\e_{12}$ or $\e_{21}$ as in TW-1bSF, $\e^c_{1r}$ and $\e^c_{2r}$ should hold. Therefore
\begin{equation}
T_{\text{TW-SDF}} \approx 2E + E_r(\stackrel{_\frown}{P}_{12}+\stackrel{_\frown}{P}_{21})(1-\stackrel{_\frown}{P}_{1r})(1-\stackrel{_\frown}{P}_{2r}).
\end{equation}
Therefore, $T_{\text{TW-SDF}}\sim T_{\text{TW-1bSF}}\sim 2E + 2E_r\stackrel{_\frown}{P}_{12}$,
which means that the TW-1bSF protocol consumes almost the same transmission energy as the TW-SDF protocol.
\end{rem}

\section{Results and Discussion}

The proposed TW-1bSF protocol is simple in terms of signal processing complexity, and hence consumes relatively little power. Hence, it is suitable for use in communication systems with stringent power constraints such
as Bluetooth (IEEE 802.15.1) \cite{Razavi2007}, IEEE 802.15.6 \cite{Kwak2011} and WBAN (IEEE 802.15.6) \cite{Kwak2011}. Therefore, in order to assess the performance
of the TW-1bSF protocol we use channel codes utilized by these standards. As an example of perfect codes, the
(15,11) Hamming code is adopted by Bluetooth \cite{Razavi2007}, and so it is studied in this section. Let $r_{c}$ be the code rate and $E_{b}/N_{0}$
be the SNR of the information bits. Then the SNR of the code bits is $E/N_{0}=r_{c}E_{b}/N_{0}$.
In sensor networks, in general all nodes can switch their mode between acting as a
transmitter and as a relay, so it is natural to assume that $E=E_{r}$.
In the rest of analysis, it is also assumed that $h=h_{1}=h_{2}=1$,
because the channel changes slowly and we can always use power control to compensate for the effects of fading. In this setting, both conditions in Remark~\ref{rem:rel-design} that render TW-1bSF preferable are satisfied.

\begin{figure}
\center\includegraphics[width = 5in]{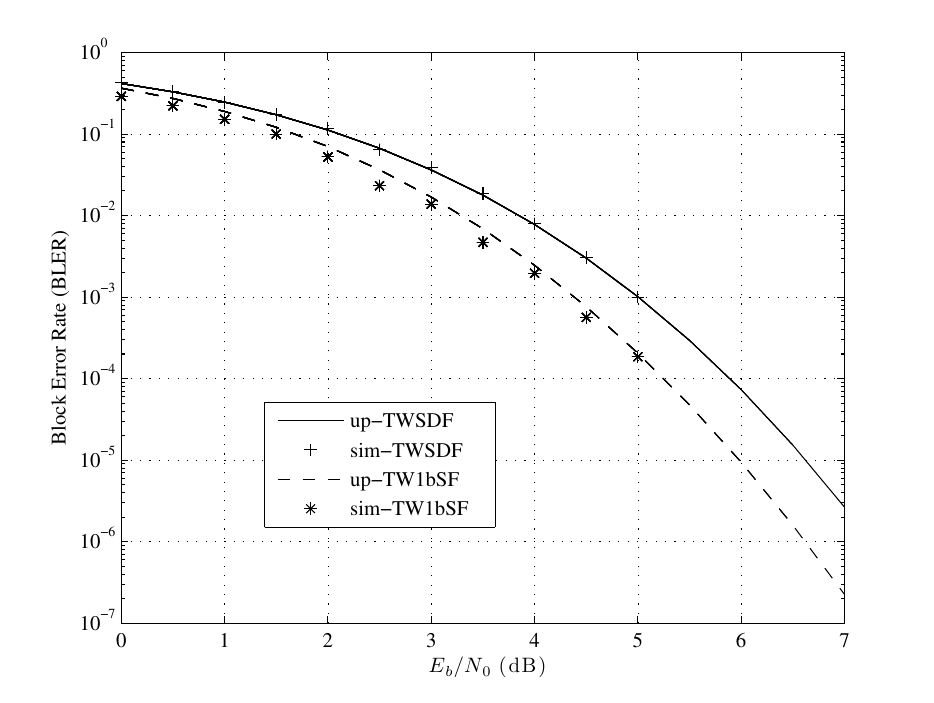}
\caption{\label{fig:Hamming-SF}Simulation results vs analytical results for
TW-1bSF and TW-SDF in the two-way network setting for the (15,11) Hamming code. ``sim'' stands for simulation,  whereas ``up'' stands for the derived BLER upper bounds
\eqref{eq:bler-SDF} and \eqref{eq:up-TW-1bSF}.}
\end{figure}

Fig.~\ref{fig:Hamming-SF} shows the simulated
and analytical BLER performance for TW-SDF and TW-1bSF when the (15,11) Hamming code is employed.
It can be seen that the simulation results are
very close to the upper bound derived for TW-1bSF, and
match perfectly with the derived upper bound for TW-SDF. This comparison verifies
the accuracy of the performance analysis presented in Section IV. It also shows that for BLERs ranging from $10^{-2}$
to $10^{-3}$, the SNR gain of TW-1bSF over TW-SDF is around 0.6 dB, which
offers power savings of about $10^{0.6/10}-1 \approx 15\%$.

\setcounter{table}{0}

\begin{table*}
\centering
\caption{Values of the sphere partition function (SPF) $W_k$ for two-error-correcting $(t=2, d_{\min}=5$) BCH codes with code length $n=15,31,63,127,255, \text{and } 511$.}
\label{table:SPF}
\begin{tabular}{|c||c|c|c|c|c|c|}
\hline
$W_{k}$ & $n=15$ & $n=31$ & $n=63$ & $n=127$ & $n=255$ & $n=511$\tabularnewline
\hline
\hline
$k=0$ & 36 & 71 & 170 & 164 & 336 & 362\tabularnewline
\hline
$k=d_{\min}$ & 281 & 2123 & 19316 & 161664 & 1351834 & 10949163\tabularnewline
\hline
$k=d_{\min}+1$ & 89 & 540 & 2557 & 5399 & 19172 & 42719\tabularnewline
\hline
\end{tabular}
\end{table*}

As an extension of Hamming codes, BCH codes are capable of correcting multiple error bits and hence are also widely adopted by small-area
communication systems \cite{Sankarasubramaniam2003,Kwak2011}. For the simulations of this paper, the BCH codewords are generated by a systematic encoder, and the received words are decoded by a Berlekamp-Massey decoder; both the encoder and the decoder are provided by the Matlab Communication Toolbox. It is worthwhile to note that in the Matlab BCH decoder, if a decoding failure happens, then the first $k$ received bits are decoded as the estimate of the transmitted information message if the code rate is $k/n$. The SPF $W_k$ associated with the Matlab encoder/decoder is shown in Table~\ref{table:SPF} for some typical BCH codes. The BLER results obtained for the (127,113) code by simulation and using the analytical upper bound for TW-1bSF and TW-SDF are shown in Fig.~\ref{fig:BCH-BER}.
Again, the simulation results
are tightly upper bounded by the derived bound for TW-SDF, and well bounded by the bound for TW-1bSF,
which is asymptotically tight. The simulation results also show that, when the BLER is in the range between
$10^{-2}$ and $10^{-3}$, the SNR gain attained by TW-1bSF is about 0.8
dB, which corresponds to power savings of $10^{0.8/10}\approx 20\%$. Fig.~\ref{fig:Hamming-SF} and \ref{fig:BCH-BER} indicate that the asymptotic tightness of the derived upper bounds for the TW-SDF and the TW-1bSF protocols holds for any hard decoders with negligible $W_0$. This is because these decoders can be approximated by the BDD, which has $W_0=0$, and by which the  upper bounds are achieved. As can be seen in Table.~\ref{table:SPF}, the Matlab BCH decoder that is employed has very small $W_0$.

\begin{figure}
\center\includegraphics[width = 5in]{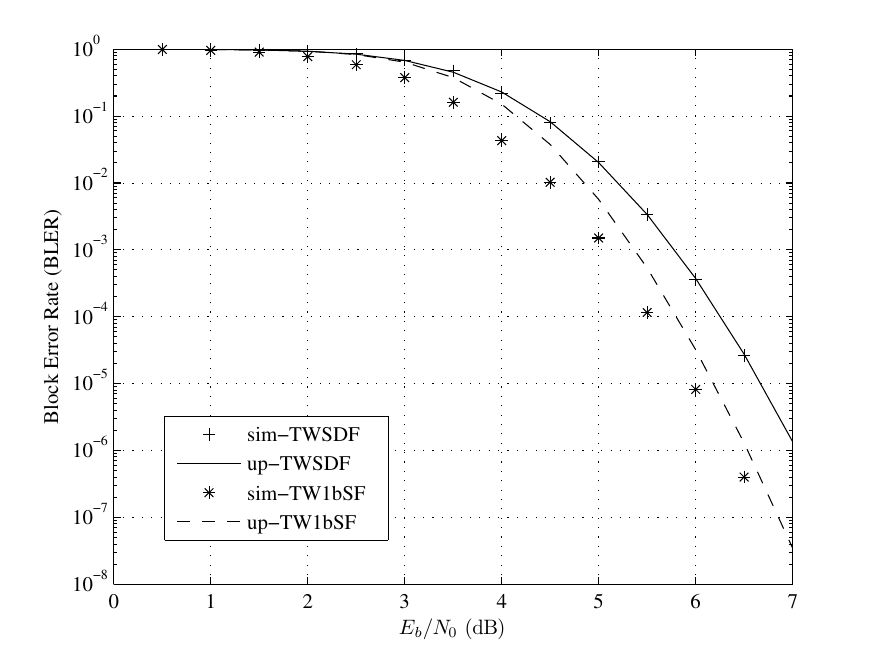}
\caption{\label{fig:BCH-BER}Performance of TW-SDF and TW-1bSF for the two-way network using the (127,113) BCH code.}
\end{figure}

\begin{figure}
\center\includegraphics[width = 5in]{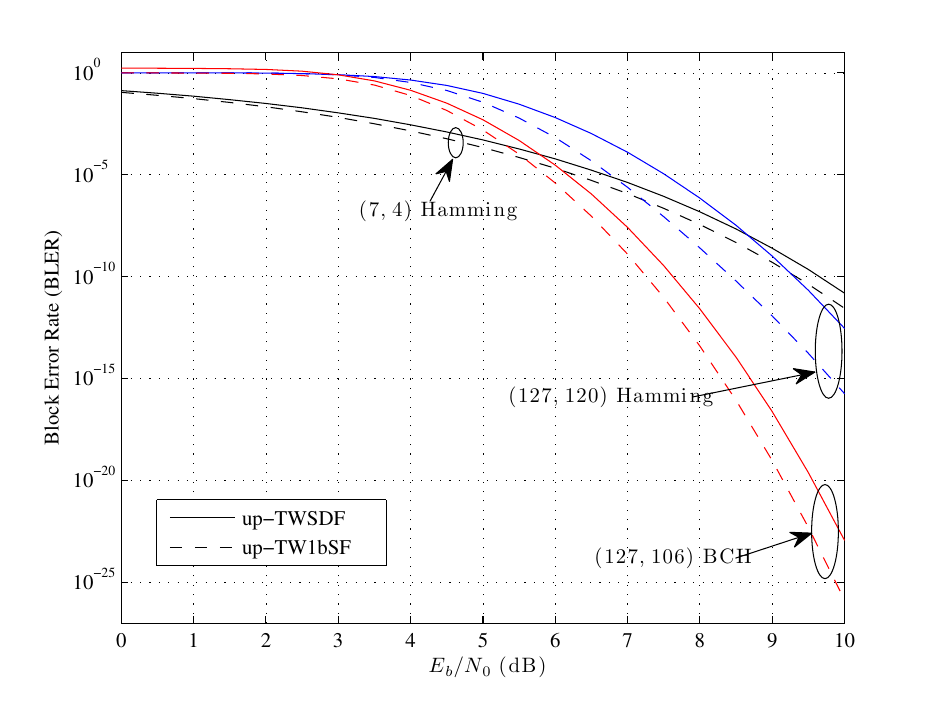}
\caption{\label{fig:codelength}Performance comparison of TW-SDF and TW-1bSF for different
codes in terms of the block error rate (BLER). ``$(7,4)$ Hamming'' represents the $(7,4)$ Hamming code,
``$(127,120)$ Hamming'' represents the $(127,120)$ Hamming code and ``$(127,106)$ BCH'' represents the $(127,106)$ BCH code.}
\end{figure}

\begin{figure}
\center\includegraphics[width = 5in]{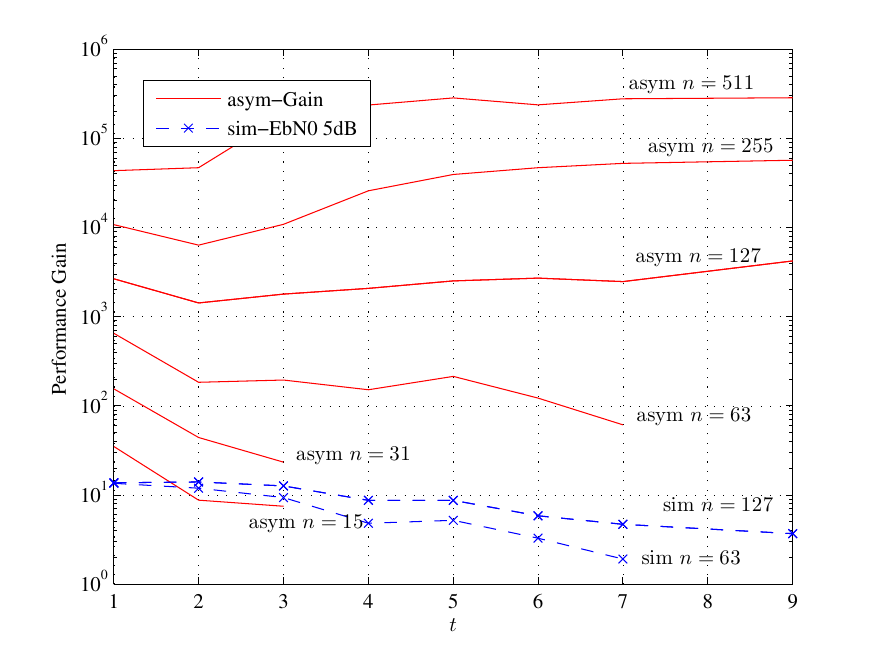}
\caption{\label{fig:asym-gain}Asymptotic and simulated performance gain of TW-1bSF over TW-SDF versus the
error-correcting capability $t$ when using BCH codes with different codeword length $n$.
``asym-Gain'' results are derived from
\eqref{eq:asym-gain}, while ``sim-EbN0 5dB'' results are the values of $P^{(\vx_1)}_{\text{TW-SDF}}/P^{(\vx_1)}_{\text{TW-1bSF}}$ obtained
by simulation.}
\end{figure}

\begin{figure}
\center\includegraphics[width=5in]{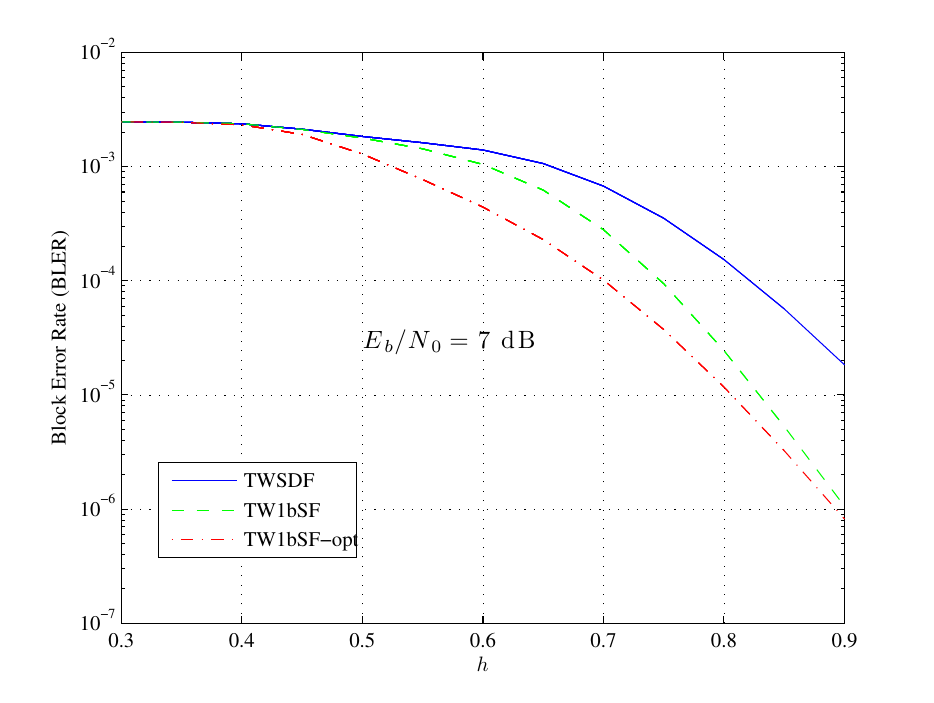}
\centerline{(a)}
\center\includegraphics[width=5in]{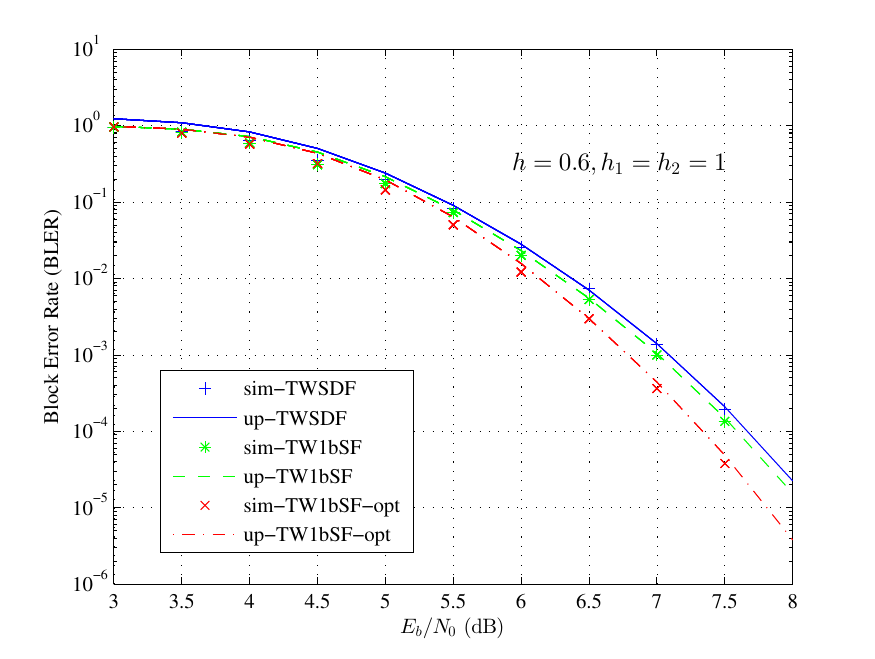}
\centerline{(b)}
\caption{(a) Comparison of reliability value designs. The (127,113) BCH code
is employed. ``TW1bSF-opt'' stands for the results derived using the reliability value design  \eqref{eq:rel-new}, whereas ``TW1bSF'' stands for the results derived using the reliability value design  \eqref{eq:rel-old}; (b) BLER performance versus the SNR for $h=0.6$ and $h_1=h_2=1$. }
\label{fig:rel-comparison}
\end{figure}

%

 Fig.~\ref{fig:codelength} displays the upper bound curves as a function of the SNR to illustrate the impact of the codeword length and the error-correcting capability of the code on the performance gain of TW-1bSF over TW-SDF.
By comparing the performance when the (7,4) Hamming and the (127,113) Hamming code
are used, it can be seen that for fixed error-correcting capability $t$, the performance gain increases with the codeword length $n$. However, the comparison between
the (127,120) Hamming code and the (127,113) BCH code shows that, for fixed codeword
length $n$, higher error-correcting capability $t$ may reduce the performance gain.
For  BCH codes, Fig.~\ref{fig:asym-gain}
quantitatively depicts the asymptotic performance gain \eqref{eq:asym-gain} as a function
of the codeword length $n$, and the error-correcting capability $t$. It also presents the simulated performance gain obtained when $E_b/N_0 = 5$~dB.
In the figure, the asymptotic performance gain and the simulated performance gain demonstrate similar functional trends over the
error-correcting capability $t$, and both increase with the codeword length
$n$. However, they are not monotonically related with $t$.

%
%

Finally, Fig.~\ref{fig:rel-comparison} verifies the conclusion in Remark \ref{rem:rel-design} that the reliability value design
of \eqref{eq:rel-new} is better than \eqref{eq:rel-old}.
Fig.~\ref{fig:rel-comparison}(a) shows the BLER performance for the two choices of the
reliability value as a function of the channel gain of the direct link $h$, for fixed $E_b/N_0 = 7$dB. When $h$
is small, which can be interpreted as a setting with no direct link, the performance of TW-SDF
and TW-1bSF is the same for both reliability designs. The reliability
value \eqref{eq:rel-new} starts to show performance gains over TW-SDF
when $h=0.4$, and the performance gains keep increasing as $h$ grows further. By contrast, when the reliability value \eqref{eq:rel-old} is used, the performance is not superior to SDF until $h=0.5$, where Condition (2) begins to hold.
The performance improvement of design \eqref{eq:rel-new} over design \eqref{eq:rel-old} is
appealing between $h=0.6$ and $h=0.7$. Furthermore, Fig.~\ref{fig:rel-comparison}(b) shows the BLER performance of the two reliability value designs as a function of the SNR, for $h=0.6$. It can be seen that, in the BLER range from $10^{-2}$ to $10^{-4}$, the TW-1bSF protocol based on design \eqref{eq:rel-new} achieves a 0.6 dB gain over TW-SDF, which is much better than the negligible gain when design \eqref{eq:rel-old} is employed.


\section{Conclusions}

In this paper, we proposed an energy-efficient two-way one-bit soft forwarding (TW-1bSF) protocol to improve the performance of the TW-SDF protocol in DNC-based two-way relay networks. The idea is to always forward the network-coded packet during the third time slot even when it contains erroneous bits, instead of discarding it by the relay as in the TW-SDF protocol. The key ingredient of TW-1bSF is to also assign a reliability value to the sources. This value is used to weigh the erroneous network-coded packet, and improve the BLER. We prove theoretically, by carefully designing the reliability parameter, that the TW-1bSF protocol benefits from the superior performance of the soft relaying approach while preserving the simplicity of the TW-SDF protocol. Moreover, the bandwidth requirements are negligible, as only a one-bit indicator is sent along with the network-coded packet. Both theoretic and simulation results show that TW-1bSF attains a 0.6 dB gain over TW-SDF in practical settings. We also derive tight upper bounds on the BLER of the TW-1bSF and the TW-SDF protocols when block codes are used and hard decoding is applied at the receiving ends, and we verify the bounds by simulation.
Further analysis shows that the asymptotic performance gain of the TW-1bSF protocol over the TW-SDF protocol grows with the codeword length. This suggests that the TW-1bSF protocol can attain an impressive
performance improvement compared to TW-SDF, especially when codes with long lengthes are used. In future work, we intend to study the performance of TW-1bSF in two-way relaying networks with multiple relays and slow fading.

\appendices
\section{Proof of Theorem~\ref{thm:ineq-noe-CaseI}}\label{app:proof:theorem:caseI}

As shown in Fig.~\ref{fig:case1-block}, suppose that among the
$i$ erroneous bits of $\vy_{12}$ there are $m$ bits overlapping with the erroneous bits of $\tilde{\vy}_{r2}$. Then $m\leq t$ must hold so that $d(\vl_d)\leq t$ be possible.
Note that Fig.~\ref{fig:case1-block} only shows a special case of consecutive erroneous bits, but the following proof is general. Let $(\tilde{\vy}_{r2},\vy_{12})_{(k,m,i)}$ denote the event that $\tilde{\vy}_{r2}$ and $\vy_{12}$ have $k$ and $i$ erroneous channel bits, respectively, $m$ erroneous bits of which overlap. Given the signal $\tilde{\vy}_{r2}$ of weight $k$,
the number of such $\vy_{12}$ that correspond to the pair $(m,i)$ is
$\tbinom{k}{m}\tbinom{n-k}{i-m}$. Therefore,
\begin{IEEEeqnarray}{rCl}\label{eq:CaseI-Px}
 &&P(d(\vl_d)\leq t,d(\tilde{\vy}_{r2}) \nonumber \\
 &=&k,d(\vy_{12})\geq t+1) =\sum_{m=\max(0,t+1-(n-k))}^{t}\sum_{i=t+1}^{m+n-k}\binom{k}{m}\binom{n-k}{i-m}  P(d(\vl_d)\leq t,(\tilde{\vy}_{r2},\vy_{12})_{(k,m,i)}).
\end{IEEEeqnarray}
%

%
To derive $P(d(\vl_d)\leq t,(\tilde{\vy}_{r2},\vy_{12})_{(k,m,i)})$ we first observe that $d(\vl_d)\leq t$ occurs if there are $g\leq t-m$ erroneous channel bits among the bits $\{\tau:\tilde{y}_{r2,\tau}<0,y_{12,\tau}>0\}$ and $\{\tau:\tilde{y}_{r2,\tau}>0,y_{12,\tau}<0\}$. Suppose that $j$ of the $g$ erroneous bits are within the set $\{\tau:\tilde{y}_{r2,\tau}<0,y_{12,\tau}>0\}$, and that the remaining $g-j$ erroneous bits are within the set $\{\tau:\tilde{y}_{r2,\tau}>0,y_{12,\tau}<0\}$, as shown in Fig.~\ref{fig:case1-block}. For the $\tau$-th channel bit, let
\begin{IEEEeqnarray}{rCl}
p_{\co}&=&P(\tilde{y}_{r2,\tau}>0,y_{12,\tau}>0)= \left(1-p_{r2}\right)\left(1-p_{12}\right) \IEEEyessubnumber \\
q_{\co}&=&P(\tilde{y}_{r2,\tau}<0,y_{12,\tau}<0)= p_{12}p_{r2} \IEEEyessubnumber \\
\dot{p}_{\co}&=&P(\ell_{d,\tau}<0,\tilde{y}_{r2,\tau}<0,y_{12,\tau}>0) =\Pr(h_{2}\sqrt{E_r}\tilde{y}_{r2,\tau}+h\sqrt{E}y_{12,\tau}<0,y_{12,\tau}>0) \IEEEyessubnumber \\
 \dot{q}_{\co}&=&P(\ell_{d,\tau}>0,\tilde{y}_{r2,\tau}<0,y_{12,\tau}>0) =p_{r2}(1-p_{12})-\dot{p}_{\co} \IEEEyessubnumber \\
 \ddot{p}_{\co}&=&P(\ell_{d,\tau}<0,\tilde{y}_{r2,\tau}>0,y_{12,\tau}<0) =\Pr(h_{2}\sqrt{E_r}\tilde{y}_{r2,\tau}+h\sqrt{E}y_{12,\tau}<0,\tilde{y}_{r2,\tau}>0)  \IEEEyessubnumber \\
 \ddot{q}_{\co}&=&P(\ell_{d,\tau}>0,\tilde{y}_{r2,\tau}>0,y_{12,\tau}<0)  =p_{12}(1-p_{r2})-\ddot{p}_{\co}, \IEEEyessubnumber
\end{IEEEeqnarray}
where $\ell_{d,\tau}=4h\sqrt{E}y_{12,\tau}+4h_2\sqrt{E_r}\tilde{y}_{r2,\tau}$ with $y_{12,\tau}\sim \mathcal{N}(h\sqrt{E},\tfrac{1}{2})$ and $\tilde{y}_{r2,\tau}\sim \mathcal{N}(h_2\sqrt{E_r},\tfrac{1}{2})$. By counting all combinations of $g$ and $j$, we get
\begin{eqnarray}\label{eq:caseI-Pri}
 && P(d(\vl_d)\leq t,(\tilde{\vy}_{r2},\vy_{12})_{(k,m,i)}) \cr
& = & p_{\co}^{n-k-(i-m)}q_{\co}^{m}\sum_{g=0}^{t-m}\sum_{j=0}^{g}\binom{k-m}{j}
\dot{p}_{\co}^{j}\dot{q}_{\co}^{k-m-j}  \binom{i-m}{g-j}\ddot{p}_{\co}^{g-j}\ddot{q}_{\co}^{i-m-(g-j)}.
\end{eqnarray}
Finally, substituting \eqref{eq:caseI-Pri} into \eqref{eq:CaseI-Px}, and then into \eqref{eq:ineq-noe-CaseI}, we get \eqref{eq:lb-noe-CaseI}, which proves the theorem.

\section{Proof of Theorem~\ref{thm:BLER_SDF_upb}} \label{app:proof:thm:BLER_SDF_upb}
We first prove two inequalities. The first one is
\begin{align}
\Theta_{\co}\cup \Theta_{\cc} \cup \Theta_{\ccc}
&=[\e_{12}\cap \e^c_{1r}\cap \e^c_{2r}\cap \e_{r2}] \cup [\e_{12}\cap(\e_{1r}\cup \e_{2r})]  =   \e_{12}\cap (\e^c_{1r}\cap \e^c_{2r}\cap \e^c_{r2})^c \cr
& \subseteq  \grave{\e}_{12}\cap (\grave{\e}^c_{1r}\cap \grave{\e}^c_{2r}\cap \grave{\e}^c_{r2})^c  =  \grave{\Theta}_{\co}\cup \grave{\Theta}_{\cc} \cup \grave{\Theta}_{\ccc},
\end{align}
which is derived using the facts $\e_{12}\subseteq \grave{\e}_{12}$, $\grave{\e}^c_{1r}\subseteq \e^c_{1r}$,
$\grave{\e}^c_{2r}\subseteq \e^c_{2r}$ and $\grave{\e}^c_{r2}\subseteq \e^c_{r2}$. The other is
\begin{equation}
\Theta_{\cc}\cup \Theta_{\ccc}  =   \e_{12}\cap(\e_{1r}\cup \e_{2r})  \subseteq  \grave{\e}_{12}\cap(\grave{\e}_{1r}\cup \grave{\e}_{2r})  =   \grave{\Theta}_{\cc} \cup \grave{\Theta}_{\ccc}.
\end{equation}
We then decompose $\Theta_\co$ as
\begin{equation}
\Theta_\co  =  \underset{\Theta_{\co.a}}{ \underbrace{\e_{12}\cap \grave{\e}^c_{1r}\cap \grave{\e}^c_{2r}\cap \e_{r2}} }
 + \underset{\Theta_{\co.b}}{\underbrace{\e_{12}\cap (\e^c_{1r}\diagdown \grave{\e}^c_{1r})\cap
(\e^c_{2r}\diagdown \grave{\e}^c_{2r})\cap \e_{r2}}}.
\end{equation}
Because $\Theta_{\co.b}\cup \Theta_\cc \cup \Theta_{\ccc} \subseteq \Theta_{\co}\cup \Theta_{\cc} \cup \Theta_{\ccc}
\subseteq \grave{\Theta}_{\co}\cup \grave{\Theta}_{\cc} \cup \grave{\Theta}_{\ccc}$ and $(\Theta_{\co.b}\cup \Theta_\cc \cup \Theta_\ccc) \cap \grave{\Theta}_{\co} = \emptyset$, we conclude that $\Theta_{\co.b}\cup \Theta_\cc \cup \Theta_\ccc \subseteq \grave{\Theta}_{\cc} \cup \grave{\Theta}_{\ccc}$, and thus
\begin{equation}\label{eq:gTcc_gTccc}
P(\grave{\Theta}_{\cc}\cup \grave{\Theta}_{\ccc})  \geq   P(\Theta_{\co.b}\cup \Theta_\cc \cup \Theta_\ccc)
=P(\Theta_{\co.b}) + P(\Theta_\cc \cup \Theta_\ccc).
\end{equation}
Combining \eqref{eq:gTcc_gTccc} with $\Theta_{\co.a}\subseteq \grave{\Theta}_\co$, we have
\begin{align}
P_{\text{TW-SDF}}^{(\vx_1)} &=  P(\e,\Theta_{\co})+P(\Theta_{\cc}\cup \Theta_{\ccc})  =  P(\e,\Theta_{\co.a})+ P(\e,\Theta_{\co.b}) +P(\Theta_{\cc}\cup \Theta_{\ccc}) \cr
& \leq  P(\e,\Theta_{\co.a})+ P(\Theta_{\co.b}) +P(\Theta_{\cc}\cup \Theta_{\ccc})  \leq  P(\e,\grave{\Theta}_{\co}) + P(\grave{\Theta}_{\cc} \cup \grave{\Theta}_{\ccc})  \cr
& \leq  P(\grave{\e},\grave{\Theta}_{\co}) + P(\grave{\Theta}_{\cc} \cup \grave{\Theta}_{\ccc}),
\end{align}
which proves the theorem.

\section{Proof of Theorem~\ref{thm:barP_caseII.1}} \label{app:proof:thm:barP_caseII.1}

\begin{figure}
\center\includegraphics[scale=0.4]{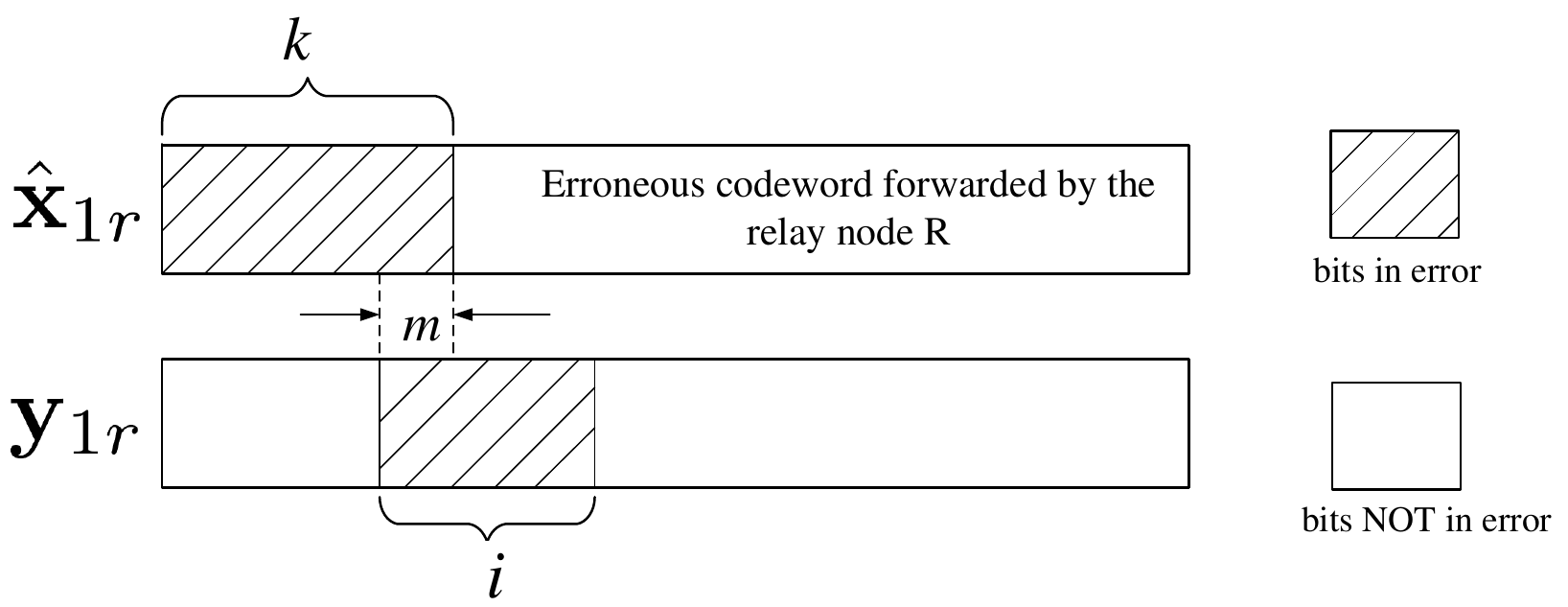}
\caption{\label{fig:SF-caseII.1}Error position alignment between the received
signals from the $R$-$S2$ link and the $S1$-$S2$ link for Case II.a. }
\end{figure}

In the subcase $\grave{\Theta}_{\cc a}$, note that $\D(\tilde{\vy}_{r2})=\hat{\vx}_{1r}$ because $\grave{\e}^c_{r2}  \Rightarrow \D(\tilde{\vy}_{r2}) = \hat{\vx}_{1r} \circ \hat{\vx}_{2r} \circ \vx_2$ and $\grave{\e}^c_{2r} \Rightarrow  \hat{\vx}_{2r} = \vx_2 = \mathbf{1}$. Using this fact and the inequality $P({\e}^c)\geq P\left(d(\vl_d)\leq t\right)$ with $\vl_d = 4h\sqrt{E}\vy_{12} + \hat{\vx}_{1r}\mathcal{L}$, we have
\begin{eqnarray}\label{eq:ineq2-CaseII.1}
&& \makebox[-1cm]{} P\left({\e}^c,d(\vy_{12})\geq t+1 \mid d(\hat{\vx}_{1r})=k,\grave{\e}^c_{2r},\grave{\e}^c_{r2}\right) \cr
&\geq&  P( d(\vl_d)\leq t,d(\vy_{12})\geq t+1 \mid d(\hat{\vx}_{1r})=k,\mathfrak{D}(\tilde{\vy}_{r2})=\hat{\vx}_{1r} ).
\end{eqnarray}

Now, given $d(\hat{\vx}_{1r})=k$ and $d(\vy_{12}) = i$ as shown in Fig.~\ref{fig:SF-caseII.1}, the inequality $d(\vl_d)\leq t$ holds only if the cardinality of $\{\tau:\hat{x}_{1r,\tau}=-1, y_{12,\tau}<0\}$ does not exceed $t$, i.e., $m\leq t$, and among the bits $\{\tau:\hat{x}_{1r,\tau}=-1,y_{12,\tau}>0\}$ and $\{\tau:\hat{x}_{1r,\tau}=1,y_{12,\tau}<0\}$ there are only $g\leq t-m$ erroneous channel bits. Suppose that $j$ erroneous bits are within the set $\{\tau:\hat{x}_{1r,\tau}=-1,y_{12,\tau}>0\}$, and that the remaining $g-j$ erroneous bits are within the set $\{\tau:\hat{x}_{1r,\tau}=1,y_{12,\tau}<0\}$.
For the $\tau$-th channel bit, let
\begin{subequations}
\begin{align}
p_{\cc}&=P(y_{12,\tau}>0)=1-p_{12}, \\
q_{\cc}&=P(y_{12,\tau}<0)=p_{12}, \\
\dot{p}_{\cc}&=P(\ell_{d,\tau}<0,y_{12,\tau}>0\mid \hat{x}_{1r,\tau}=-1) =\Pr(4h\sqrt{E}y_{12,\tau}-\mathcal{L}<0,y_{12,\tau}>0),  \\
\dot{q}_{\cc}&=P(\ell_{d,\tau}>0,y_{12,\tau}>0\mid \hat{x}_{1r,\tau}=-1) =\Pr(4h\sqrt{E}y_{12,\tau}-\mathcal{L}>0,y_{12,\tau}>0), \\
\ddot{p}_{\cc}&=P(\ell_{d,\tau}<0,y_{12,\tau}<0\mid \hat{x}_{1r,\tau}=1) =\Pr(4h\sqrt{E}y_{12,\tau}+\mathcal{L}<0,y_{12,\tau}<0), \\
\ddot{q}_{\cc}&=P(\ell_{d,\tau}>0,y_{12,\tau}<0\mid \hat{x}_{1r,\tau}=1) =\Pr(4h\sqrt{E}y_{12,\tau}+\mathcal{L}>0,y_{12,\tau}<0).
\end{align}
\end{subequations}
Then by summing the probability of correct detection
$p_{\cc}^{n-k-(i-m)}q_{\cc}^{m}\dot{p}_{\cc}^{j}\dot{q}_{\cc}^{k-m-j}\ddot{p}_{\cc}^{g-j}\ddot{q}_{\cc}^{i-m-(g-j)}$
over the combinations of the $5$-tuples $(k,i,m,g,j)$, we have
\begin{eqnarray}\label{eq:noe_yrk}
&& \makebox[-.7cm]{}P(d(\vl_d)\leq t,d(\vy_{12})\geq (t+1)\mid d(\hat{\vx}_{1r})=k,\mathfrak{D}(\tilde{\vy}_{r2})=\hat{\vx}_{1r})
\nonumber \\
& = & \sum_{m=\max(0,t+1-(n-k))}^{t}\sum_{i=t+1}^{m+(n-k)}\binom{k}{m}\binom{n-k}{i-m} \nonumber \\
&& \makebox[0.3cm]{}\times p_{\cc}^{n-k-(i-m)}p_{12}^{m}\sum_{g=0}^{t-m}\sum_{j=0}^{g}\binom{k-m}{j}
\dot{p}_{\cc}^{j}\dot{q}_{\cc}^{k-m-j}\binom{i-m}{g-j}\ddot{p}_{\cc}^{g-j}\ddot{q}_{\cc}^{i-m-(g-j)}.
\end{eqnarray}
Finally, substituting  \eqref{eq:noe_yrk} into \eqref{eq:ineq2-CaseII.1} and then into \eqref{eq:ineq-CaseII.1}, we obtain \eqref{eq:lb-CaseII.1}, and thus prove the theorem.

\section{Proof of Lemma~\ref{lemma:asym-ratio}} \label{app:proof:lem:asym-ratio}

Let $ 0 < \epsilon < \min\{\mu_1,\mu_2\}$ be a small positive value. We can lower bound $\Lambda_1$ as
\begin{align}
\Lambda_1
  = & \left(\int_{-(\mu_1-\epsilon)\sqrt{E}}^{0}+\int_{-\infty}^{-(\mu_1-\epsilon)\sqrt{E}}\right) \left(1-Q((\mu_1\sqrt{E}+z_2)\sqrt{2})\right)f(z_2)dz_2\nonumber \\
  \geq & \left(1-Q(\epsilon\sqrt{2E})\right)\left(Q(\mu_2\sqrt{2E})-Q((\mu_2+\mu_1-\epsilon)\sqrt{2E})\right),\nonumber
\end{align}
which is derived by lower bounding the second integral by $0$, and the first integral using the fact that
\begin{equation}
\left(1-Q((\mu_1\sqrt{E}+z_2)\sqrt{2})\right)\mid _{z_2\in[0,-(\mu_1-\epsilon)\sqrt{E}]}
\geq  \left(1-Q((\mu_1\sqrt{E}+z_2)\sqrt{2})\right)\mid _{z_2=-(\mu_1-\epsilon)\sqrt{E}}. \nonumber
\end{equation}
On the other hand,  we can upper bound $\Lambda_2$ as
\begin{align}
\Lambda_2
 & =  \left(\int_{-(\mu_1-\epsilon)\sqrt{E}}^{0}+\int_{-\infty}^{-(\mu_1-\epsilon)\sqrt{E}}\right)\left(Q((\mu_1\sqrt{E}+z_2)\sqrt{2})-Q(\mu_1\sqrt{2E})\right)f(z_2)dz_2\nonumber \\
 & \leq  \left(Q(\epsilon\sqrt{2E})-Q(\mu_1\sqrt{2E})\right)\left(Q(\mu_2\sqrt{2E})-Q((\mu_2+\mu_1-\epsilon)\sqrt{2E})\right) + Q(-\mu_1\sqrt{2E}) Q((\mu_2+\mu_1-\epsilon)\sqrt{2E}). \nonumber
\end{align}
Thus when $E\rightarrow\infty$, we have $Q(-\mu_1\sqrt{2E})\rightarrow 1$, $Q(\epsilon\sqrt{2E})/ Q(\mu_1\sqrt{2E})\rightarrow \infty$ and $Q(\mu_2\sqrt{2E})/Q((\mu_2+\mu_1-\epsilon)\sqrt{2E})\rightarrow \infty$, and finally
\begin{equation}
\frac{\Lambda_1}{\Lambda_2}\geq\frac{(1-Q(\epsilon\sqrt{2E}))
Q(\mu_2\sqrt{2E})}{Q(\epsilon\sqrt{2E})Q(\mu_2\sqrt{2E})+Q((\mu_2+\mu_1-\epsilon)\sqrt{2E})}\rightarrow\infty, \nonumber
\end{equation}
which proves the lemma.

\bibliographystyle{IEEEtran}




%


%

\end{document}